\documentclass{ifacconf}
\usepackage{graphicx}      % include this line if your document contains figures

\usepackage{amsmath,amssymb}
\usepackage{bm}
\usepackage{svg}
\usepackage{textcomp}
\usepackage{xcolor}
\usepackage{url}
\usepackage{algorithm}
\usepackage[noend]{algorithmic}% ht
\usepackage{xprintlen}
\usepackage{cases}
\usepackage{tabularx,ragged2e}
\usepackage{booktabs}
\usepackage{nccmath}
\usepackage{float}
\usepackage{mathtools}
\usepackage{verbatim}
\usepackage{import}

\renewcommand{\vec}[1]{\bm{#1}}

\def\myunderbar#1{\underline{\sbox\tw@{$#1$}\dp\tw@\z@\box\tw@}}

\newcommand{\vect}[1]{\boldsymbol{#1}}
\newcommand{\norm}[1]{ \lVert #1 \rVert}
\newcommand{\abs}[1]{ | #1 |}

\newcommand{\Nhop}[2]{\mathcal{N}_{#1}^{{#2}\text{-hop}}} 
\newcommand{\Neigh}[2]{\mathcal{N}_{#1}^{{#2}}} 
\newcommand{\estimate}[3]{\hat{{#1}}^{{#2}}_{{#3}}}
\newcommand{\state}[3]{{{#1}}^{{#2}}_{{#3}}}
\newcommand{\error}[3]{\Tilde{{#1}}^{{#2}}_{{#3}}}
\newcommand{\statediff}[5]{{#1}^{#2}_{#3}-{#4}^{#5}_{#3}}


\usepackage{natbib}        % required for bibliography
\begin{document}
\begin{frontmatter}

\title{Observer-based Control of Multi-agent Systems under STL Specifications\thanksref{footnoteinfo}} 
% Title, preferably not more than 10 words.

\thanks[footnoteinfo]{This work was supported in part by the Wallenberg AI, Autonomous Systems and Software Program (WASP) funded by the Knut and Alice Wallenberg (KAW) Foundation, Horizon Europe EIC project SymAware (101070802), the ERC LEAFHOUND Project, and the Swedish Research Council (VR).}

\author[First]{Tommaso Zaccherini}
\author[Second]{Siyuan Liu}
\author[First]{Dimos V. Dimarogonas} 

\address[First]{Division of Decision and Control Systems, KTH Royal Institute of Technology, Stockholm, Sweden. (e-mails: \{tommasoz, dimos\}@kth.se)}
\address[Second]{Control Systems Group, Department of Electrical Engineering, Eindhoven University of Technology, the Netherlands. (e-mail: s.liu5@tue.nl)}

% \author[First]{Tommaso Zaccherini} 
% \author[Second]{Siyuan Liu}
% \author[First]{Dimos V. Dimarogonas}
% \address[First]{Division of Decision and Control Systems, KTH Royal Institute of Technology, Stockholm, Sweden. E-mail: \{tommasoz, dimos\}@kth.se}
% \address[Second]{Department of Electrical Engineering, Control Systems Group, Eindhoven University of Technology, the Netherlands. Email: s.liu5@tue.nl}

\begin{abstract}                % Abstract of 50--100 words
This paper proposes a decentralized controller for \mbox{large-scale} heterogeneous \mbox{multi-agent} systems subject to bounded external disturbances, where agents must satisfy Signal Temporal Logic (STL) specifications requiring cooperation among non-communicating agents. To address the lack of direct communication, we employ a decentralized $k$-hop Prescribed Performance State Observer ($k$-hop PPSO) to provide each agent with state estimates of those agents it cannot communicate with. By leveraging the performance bounds on the state estimation errors guaranteed by the $k$-hop PPSO, we first modify the space robustness of the STL tasks to account for these errors, and then exploit the modified robustness to design a decentralized \mbox{continuous-time} feedback controller that ensures satisfaction of the STL tasks even under \mbox{worst-case} estimation errors. A simulation result is provided to validate the proposed framework.

\end{abstract}
\begin{keyword}
% Multi-agent systems, Distributed control and estimation, Control over networks, Control under communication constraints
Multi-agent systems, Signal Temporal Logic, Observer-based control, Task/communication graphs mismatch, Decentralized control and estimation.
% \red{took from Systems and signals, 2 missing but the others in this category seems unrelated or usually referred to some other topics like: Control under communication constraints, at the same time other categories do not have multi-agent systems}
\end{keyword}

\end{frontmatter}
%===============================================================================
\section{Introduction}
A \mbox{multi-agent} system (MAS) consists of a collection of interacting agents that collaborate or compete to accomplish specific objectives. Distributing intelligence and control among agents enhances the scalability, flexibility, and resilience of the overall system. Such decentralized coordination enables efficient problem-solving in complex and dynamic environments where single-agent systems (SAS) often face limitations. Owing to these advantages, considerable research efforts have focused on the control of MAS under various tasks, including formation control \citep{OH2015424}, consensus \citep{4118472}, and flocking \citep{olfati2006flocking}.

In recent years, the control of MAS has been further extended to handle more complex specifications expressed through temporal logics \citep{maler2004monitoring}. In particular, several studies have explored the use of Signal Temporal Logic (STL) to formally specify and guarantee \mbox{high-level} tasks in multi-agent coordination \citep{LINDEMANN2021100973, 10918825} and planning \citep{9345973, 9696363}.
However, these control methods heavily rely on the existence of communication links among the agents that must cooperate, limiting their applicability in scenarios where collaboration is required despite the absence of direct communication. To address this limitation, \cite{11222720} proposes a task decomposition algorithm that redefines tasks involving \mbox{non-communicating} agents to align them with the available communication topology. Nevertheless, this approach requires solving an optimization problem to decompose the task along a selected path, introducing an additional layer of complexity to the control problem and creating a dependency on the chosen path.

For this reason, in this paper we extend our previous works on \mbox{$k$-hop} \mbox{observer-based} control for MAS \citep{11018605,zaccherini2025robustestimationcontrolheterogeneous} to handle STL specifications. By leveraging the decentralized $k$-hop Prescribed Performance State Observer ($k$-hop PPSO) developed in \cite{zaccherini2025robustestimationcontrolheterogeneous}, each agent can estimate the states of the \mbox{non-communicating} agents with which it must collaborate, while ensuring that the predefined performance bounds specified at the design stage are satisfied. Based on the performance guarantees provided by the $k$-hop PPSO on the state estimation errors, the space robustness \citep{donze2010robust} of the STL tasks involving \mbox{non-communicating} agents is adjusted to account for the \mbox{worst-case} estimation errors. Then, a decentralized Prescribed Performance Controller (PPC) \citep{4639441} based on the modified robustness is proposed to ensure satisfaction of the task even when the \mbox{non-local} information is replaced by estimates.

The remainder of the paper is organized as follows. Section~\ref{Preliminaries} presents the preliminaries and the problem formulation. Section~\ref{Robustness adaptation} introduces the adjustment for the space robustness of the STL tasks to account for the state estimation errors. Section~\ref{Section: Decentralized controller} presents the proposed \mbox{observer-based} decentralized controller. Section~\ref{Section: Simulation} shows a case study. Section~\ref{Section: Conclusion} concludes the paper with future work directions.

\par

\section{Preliminaries and Problem Setting}\label{Preliminaries}
\textbf{Notation:} We denote by $\mathbb{R}$, $\mathbb R_{\ge 0}$, and $\mathbb R_{> 0}$ the sets of real, \mbox{non-negative} and positive real numbers, respectively. Let $|S|$, $S^c$, and $\partial S$ be the cardinality, complement and boundary of a set $S$. Given $N$ sets $\{S_1,\dots, S_N\}$, we denote their Cartesian product, intersection, and union by $\bigtimes^N _{i=1} S_i$, $\bigcap^N _{i=1} S_i$, and $\bigcup^N _{i=1} S_i$, respectively. For a set $S = \{s_1, \dots, s_n\}$, $\max_{i\in\{1,\dots,n\}} \{s_i\}$ denotes its maximum element.
We use $\top$ and $\bot$ to denote the logical values true and false.
Given a symmetric matrix $B \in \mathbb{R}^{n\times n}$, $\lambda_{\min}(B)$ denotes its minimum eigenvalue, and $B \succ 0$ indicates that $B$ is positive definite. For $x \in \mathbb{R}^n$, $\norm{x}= \sqrt{x^\top x}$. We use $f \in \mathcal{C}_1$ to indicate that a function $f$ is continuously differentiable in its domain. For $a \in \mathbb{R}$, $\exp{(a)}:= e^a$.
The diagonal matrix with diagonal entries $a_1,\dots , a_n$ is denoted by $\text{diag}(a_1, \dots, a_n)$. 
% Given a set $S = \{s_1, \dots, s_n\}$, we denote by $\max_{i} \{s_i\}$ the maximum element in $S$.
% We define with $\mathcal{K}$ and $\mathcal{KL}$ the following: $\mathcal{K} \!=\! \{\gamma : \mathbb R_{\ge 0}\rightarrow\mathbb R_{\ge 0}   :  \gamma \text{ is continuous, strictly increasing and } \gamma(0)=0\}$;  $\mathcal{KL} \!=\! \{\beta : \mathbb R_{\ge 0} \!\times \mathbb R_{\ge 0} \rightarrow\mathbb R_{\ge 0}  :$ for each fixed $s$, the map  $\beta(r,s)$  belongs to class  $\mathcal{K}$  with  respect to  $r$  and, for each fixed  nonzero $r$,  the map $\beta(r,s)$ is decreasing with respect to  $s$  and $\beta(r,s) \rightarrow 0 \text{ as } s \rightarrow \infty \}$.

\subsection{Signal Temporal Logic (STL)}
\label{Section Signal Temporal Logic}
Signal Temporal Logic (STL) is a formalism for specifying temporal and quantitative properties of \mbox{continuous-time} signals \citep{maler2004monitoring}. It combines logical predicates with temporal operators to encode constraints on both signal values and time intervals over which they must hold. Each predicate $\mu$ is defined by a continuously differentiable function $\mathcal{P} : \mathbb{R}^n \rightarrow \mathbb{R}$, such that $\mu = \top$ if $\mathcal{P}(x) \ge 0$ and $\mu = \bot$ otherwise, for $x \in \mathbb{R}^n$.
In this paper we consider \mbox{non-temporal} \eqref{eq: definition of non temporal formula} and temporal formulas \eqref{eq: definition of temporal formula}, defined recursively as:
\begin{equation}
    \label{eq: definition of non temporal formula}
    \psi ::= \top \ | \  \mu \  | \ \neg \mu \ | \ \psi_1 \land \psi_2,
\end{equation}
\begin{equation}
    \label{eq: definition of temporal formula}
    \phi ::= G_{[a,b]} \psi \ | \ F_{[a,b]} \psi  \ | \  F_{[\underline{a},\underline{b}]} G_{[\bar{a},\bar{b}]}  \psi, 
\end{equation}
where $\neg$, $\land$, $ G_{[a,b]}$, and $F_{[a,b]}$ denote the negation, conjunction, always and eventually operators, with $a, b \in \mathbb{R}_{\geq 0}$ and $ a \leq b$. $\psi_1$ and $\psi_2$ in \eqref{eq: definition of non temporal formula} are \mbox{non-temporal} formulas of class \eqref{eq: definition of non temporal formula}.
To quantify how robustly a trajectory $x: \mathbb{R}_{\geq 0} \rightarrow X \subseteq \mathbb{R}^n$ satisfies or violates an STL formula $\phi$, we use the concept of robust semantics \cite[Def.~3]{donze2010robust}, defined as:
% , the robust semantics of STL operators are defined as 
\begin{equation*}
    \begin{split}
        \rho^{\mu}(x,t)&:= \mathcal{P}(x(t)), \\
        \rho^{\neg \phi}(x,t)&:= -\rho^{\phi}(x,t),    \\
         \rho^{\phi_1 \wedge \phi_2}(x,t)&:= \min (\rho^{\phi_1}(x,t), \rho^{\phi_2}(x,t)), \\
        \rho^{F_{[a, b]} \phi}(x,t)&:=\max _{t_1 \in [t+a, t+b]} \rho^\phi(x, t_1),\\
        \rho^{G_{[a, b]} \phi}(x,t)&:= \min _{t_1 \in [t+a, t+b]} \rho^\phi(x, t_1),\\
        \rho^{F_{[\underline{a},\underline{b}]} G_{[\bar{a},\bar{b}]}  \phi} &:=  \max _{t_1 \in [t+\underline{a}, t+\underline{b}]} \min _{t_2 \in [t_1+\bar{a}, t_1+\bar{b}]} \rho^\phi(x, t_2),
    \end{split}
\end{equation*}
where $\rho^\phi(x, t): X \times \mathbb{R}_{\geq 0}  \rightarrow \mathbb{R}$ is the \textit{robustness function} of $\phi$, such that $\rho^\phi(x, t) > 0$ implies that $\phi$ is satisfied at time $t$. In order to develop a \mbox{continuous-time} feedback controller that requires the derivative of the robustness function (cf. Theorem \ref{Theorem on task satisfacion}), $\rho^{\phi_1 \wedge \phi_2}(x,t)$ is approximated by the smooth function $\bar{\rho}^{\phi_1 \wedge \phi_2}(x,t) = -\frac{1}{\eta} \ln(\exp(- \eta \rho^{\phi_1}(x,t)) + \exp(- \eta \rho^{\phi_2}(x,t)))$ as in \cite{7799279}. Note that, since $\bar{\rho}^{\phi_1 \wedge \phi_2}(x,t) \leq \rho^{\phi_1 \wedge \phi_2}(x,t)$ for all $\eta >0$, with equality when $\eta \rightarrow \infty$, $\bar{\rho}^{\phi_1 \wedge \phi_2}(x,t) > 0$ implies $\rho^{\phi_1 \wedge \phi_2}(x,t)> 0$.

As stated in \cite{LINDEMANN2021100973}, satisfaction of an STL formula $\phi$ with robustness $\rho^{\phi}(x,t)$ can be ensured  by enforcing the \mbox{non-temporal} robustness $\rho^\psi(x)$ to satisfy 
\begin{equation}
    \label{prescribed perfomance on the tasks. Preliminary section}
    -\gamma(t) + \rho^{\max} < \rho^\psi(x) < \rho^{\max},
\end{equation}
where $\rho^{\max} \in \mathbb{R}_{\geq 0}$, $\gamma(t)= (\gamma^0 - \gamma^{\infty}) \exp(-l t) +\gamma^{\infty}$ with $t\in \mathbb{R}_{\geq 0}$, and the parameters $l\in \mathbb{R}_{\geq 0}$, $\gamma^0$ and $ \gamma^{\infty} \in \mathbb{R}_{> 0}$, can be chosen such that $\rho^\phi(x, t) > 0$ is guaranteed by imposing \eqref{prescribed perfomance on the tasks. Preliminary section}.

\subsection{Multi-agent systems}
\label{Section: multi-agent system}
\vspace{-0.3cm}
Consider a heterogeneous MAS of $N$ agents $\mathcal{V} = \{1,\dots, N\}$, where the dynamics of each agent $i \in \mathcal{V}$ is given by:
\begin{equation}
    \label{eq: agent's dynamic}
    \dot x_i(t) =  f_i(x_i(t)) + g_i(x_i(t))u_i(t) + w_i(\vect{x},t),
\end{equation}
where $x_i \in \mathbb{R}^{n_i}$ and $u_i \in \mathbb{R}^{m_i}$ denote the state and input of agent $i$, $f_i:\mathbb{R}^{n_i} \rightarrow \mathbb{R}^{n_i}$ is the flow drift, $g_i:\mathbb{R}^{n_i} \rightarrow \mathbb{R}^{n_i \times m_i}$ is the input matrix and $w_i: \bigtimes^N _{i=1}\mathbb{R}^{n_i} \times \mathbb{R}_{\geq 0} \rightarrow \mathbb{R}^{n_i}$ represents external disturbances acting on $i$. The global state and input vectors of the MAS are defined as $\vect{x}= \left[x_1^\top, \dots, x_N^\top \right]^\top \in \mathbb{R}^n$ and $ \vect{u}=\left[u_1^\top, \dots, u_N^\top\right]^\top \in \mathbb{R}^m$, where $n = \sum_{i=1}^{N} n_i$ and $m = \sum_{i=1}^{N} m_i$.
\begin{assumption}
    \label{Assumption on existence of a solution}
    (i) $f_i:\mathbb{R}^{n_i} \rightarrow \mathbb{R}^{n_i}$ and $g_i:\mathbb{R}^{n_i} \rightarrow \mathbb{R}^{n_i \times m_i}$ are locally Lipschitz continuous; (ii) $g_i(x_i)g_i(x_i)^\top $ is positive definite for all $x_i \in \mathbb{R}^n_i$ (iii) $w_i: \bigtimes^N _{i=1}\mathbb{R}^{n_i} \times \mathbb{R}_{\geq 0} \rightarrow \mathbb{R}^{n_i}$ is continuous and uniformly bounded. 
    % in $\bigtimes^N _{i=1}\mathbb{R}^{n_i} \times \mathbb{R}_{\geq 0}$.
\end{assumption}

\textbf{Communication graph:} the \mbox{information-exchange} structure among the agents is modeled by an undirected graph $\mathcal{G}_C = (\mathcal{V}, \mathcal{E}_C)$, where $\mathcal{V}$ denotes the set of agents and $\mathcal{E}_C \subseteq \mathcal{V} \times \mathcal{V}$ represents the set of communication links.  
A path between two agents $i,j \in \mathcal{V}$ is defined as a sequence of non-repeating edges that allows $j$ to be reached from $i$. A \mbox{$k$-hop} path between $i$ and $j$ corresponds to a path of length $k$, i.e., a sequence of exactly $k$ edges connecting agent $i$ to $j$. Given $i,j\in \mathcal{V}$,  we denote by $D_{ij}$ the length of the shortest path in $\mathcal{G}_C$ between $i$ and $j$. 
For each agent $i \in \mathcal{V}$, $\Nhop{i}{k} := \{ j \in \mathcal{V} \,|\, \exists \, p\text{-hop path from $j$ to $i$ with } 2 \leq p \leq k \}$ defines the set of $k$-hop neighbors of $i$. The elements of this set are labeled as $\Nhop{i}{k} = \{N^i_1, \dots, N^i_{\eta_i}\}$, where each $N^i_j \in \mathcal{V}$ denotes the global index of the \mbox{$j$-th} \mbox{$k$-hop} neighbor of $i$, and $\eta_i = |\Nhop{i}{k}|$ indicates the total number of such neighbors.   
The set of direct ($1$-hop) neighbors of agent $i$ in the communication graph is defined as $\Neigh{i}{C} := \{ j \in \mathcal{V} \,|\, (i,j) \in \mathcal{E}_C \lor j = i\}$ .

\textbf{Task dependency graph:} the MAS in \eqref{eq: agent's dynamic} is subject to a global STL specification $\phi =\land_{i=1}^N \phi_i$, where each $\phi_i$ denotes the local STL specification assigned to agent $i \in \mathcal{V}$. 
To capture the dependency of task $\phi_i$ from agents $j \in \mathcal{V} \setminus \{i\}$, a \emph{task dependency graph} $\mathcal{G}_T = (\mathcal{V}, \mathcal{E}_T)$ is introduced, where an edge $(i,j) \in \mathcal{E}_T$ if and only if $\phi_i$ depends on $x_j$. We denote by $\mathcal{N}^T_i := \{ j \in \mathcal{V} \,|\, (i,j) \in \mathcal{E}_T \lor j = i\}$ the set of \mbox{out-neighbors} of agent $i$ in the task graph $\mathcal{G}_T$. 

\begin{assumption}
    \label{Assumption on neighbors}
    $\mathcal{G}_C$ is a \mbox{time-invariant}, connected and undirected graph, and each $i\in \mathcal{V}$ knows $\Neigh{i}{C}$ and $\Nhop{i}{k}$.
\end{assumption}
\begin{assumption}
    \label{Assumption on Task graph}
    $\mathcal{G}_T$ is a directed acyclic graph. In this context, \mbox{self-loops} do not constitute a cycle and thus are allowed in $\mathcal{G}_T$.
\end{assumption}
Assumption~\ref{Assumption on neighbors} is not restrictive, since distributed neighborhood discovery algorithms have been extensively studied in the sensor network literature \citep{994183}. Assumption~\ref{Assumption on Task graph} is not limiting, as task assignment can be defined at design stage.

\subsection{Cluster-induced graph}
\label{Section: cluster induced graph}
\vspace{-0.1cm}
Before introducing the concept of clusters, let's define the intersection among graphs as follows.
\begin{definition}
    \label{def: graphs intersection}
    Given $M$ graphs $\mathcal{G}_i = (\mathcal{V}_i, \mathcal{E}_i)$, $i \in \{1,\dots, M\}$, the intersection graph is defined as $\mathcal{G}^{\cap} = (\mathcal{V}^{\cap}, \mathcal{E}^{\cap})$, where $\mathcal{V}^{\cap} = \bigcap_{i =1}^M \mathcal{V}_i$ and $ \mathcal{E}^{\cap} = \bigcap_{i =1}^M \mathcal{E}_i$. For the intersection among directed and undirected graphs, each edge $(j,l)$ of the undirected graph must be considered as existing in both directions, i.e., both as $(j,l)$ and $(l,j)$.
\end{definition}

The intersection between the task and communication graphs, i.e., the graph $\mathcal{G}_T \cap \mathcal{G}_C$ defined according to Definition~\ref{def: graphs intersection}, partitions the MAS into $N'$ connected components $\mathcal{C}_l$, referred to as \textbf{clusters}. Formally, for $l\in \mathcal{L}:=\{1,\dots, N'\}$, $\mathcal{C}_l = (\mathcal{V}_l, \mathcal{E}_l)$, where $\mathcal{V}_l  = \{l_1, \dots, l_{v_l}\} \subseteq \mathcal{V}$ denotes the set of $v_l$ agents in the $l$-th cluster, and $\mathcal{E}_l\subseteq \mathcal{E}_C \cap \mathcal{E}_T$ denotes the set of edges among them.  
From the definition of \textit{connected components}, $\cup_{l = 1}^{N'} \mathcal{V}_l= \mathcal{V}$ and $\mathcal{V}_i \cap \mathcal{V}_j = \emptyset$ hold for all $i,j \in \mathcal{L}$ with $i \neq j$. As a result, each $\mathcal{C}_l$ is associated with $\phi^{c}_l = \land_{l_i \in \mathcal{V}_l} \phi_{l_i}$, and the global STL task $\phi$ can be rewritten as $\phi = \land_{l=1}^{N'} {\phi}^c_l$. 

The \mbox{cluster-induced} graph is then defined as follows.
\begin{definition}
    \label{def: cluster induced graph}
    Given a MAS \eqref{eq: agent's dynamic}, with communication graph $\mathcal{G}_C$ and task graph $\mathcal{G}_T$, the \textit{\mbox{cluster-induced} graph} is defined as $\mathcal{G}' = (\mathcal{C}', \mathcal{E}')$, where each node in $\mathcal{C}'$ is associated to a cluster $\mathcal{C}_l$, with $l \in \mathcal{L}$, and an edge $(\mathcal{C}_l, \mathcal{C}_j) \in \mathcal{E}'$ exists if and only if there exists a task $\phi_{l_i}$, with ${l_i} \in \mathcal{V}_l$, whose satisfaction depends on the state of an agent $j_q \in \mathcal{V}_j$, with $(l_i,j_q) \in \mathcal{E}_T$.
\end{definition}
\begin{remark}
    \label{remark on the acyclicity of the cluster induced graph}
    Note that under Assumption~\ref{Assumption on Task graph}, $\mathcal{G}'$ is a directed acyclic graph. Indeed, since $\mathcal{G}'$ is obtained by grouping the agents according to the clusters $\mathcal{C}_l$, no new directed cycles can be introduced.
\end{remark}

\begin{assumption}
    \label{Assumption on the non-existance of a task not in communication inside a cluster}
     % Agents within the same cluster that are task neighbors are assumed to be in communication. Formally, $\Neigh{l_i}{T} \cap \mathcal{V}_l \subseteq \Neigh{l_i}{C} \cap \mathcal{V}_l$ holds for all $l_i\in \mathcal{V}_l$, with $l \in \mathcal{L}$.
     Task neighbors within the same cluster are assumed to be in communication. Formally, $\Neigh{l_i}{T} \cap \mathcal{V}_l \subseteq \Neigh{l_i}{C} \cap \mathcal{V}_l$ for all $l_i\in \mathcal{V}_l$, with $l \in \mathcal{L}$.
\end{assumption}
Assumption \ref{Assumption on the non-existance of a task not in communication inside a cluster} ensures that all {intra-cluster} task dependencies correspond to communication links, allowing the observer and control designs to be decoupled. 
Since $\Neigh{l_i}{T} \cap \mathcal{V}_l \subseteq \Neigh{l_i}{T}$ and $\Neigh{l_i}{C} \cap \mathcal{V}_l \subseteq \Neigh{l_i}{C}$, Assumption~\ref{Assumption on the non-existance of a task not in communication inside a cluster} does not enforce $\Neigh{l_i}{T} \setminus \Neigh{l_i}{C} = \emptyset$, nor $\Neigh{l_i}{C} \setminus \Neigh{l_i}{T} = \emptyset$. Consequently, agent $l_i\in \mathcal{V}_l$ may still have tasks involving agents $j_q \in \mathcal{V}_j$, $j \in \mathcal{L}\setminus \{l\}$, with $j_q \in \Neigh{{l_i}}{T}$ and $j_q \notin \Neigh{{l_i}}{C}$. 

\vspace{-0.1cm}
\subsection{Individual/collaborative tasks}
\label{Section: collaborative and non-collaborative tasks}
\vspace{-0.2cm}
Consider the local task $\phi_i$ assigned to agent $i \in \mathcal{V}$. As introduced in Section~\ref{Section Signal Temporal Logic}, its satisfaction is evaluated through the robustness function $\rho^{\phi_i}(\vect{x}_{\phi_i}, t)$, where $\vect{x}_{\phi_i}$ collects the state of agents involved in $\phi_i$, i.e., $x_j$ with $j\in \Neigh{i}{T}$.
Therefore, satisfaction of $\phi_i$ may depend on agent $i$ and possibly other agents in $\mathcal{V}\setminus \{i\}$. We say that $\phi_i$ is \textbf{individual} if $\Neigh{i}{T} = \{i\}$, and \textbf{collaborative} otherwise.
Collaborative tasks $\phi_i$ may involve both agents that communicate with $i$ and agents that do not, i.e., $j \in \Neigh{i}{T} \cap \Neigh{i}{C}$ and $j \in \Neigh{i}{T} \setminus \Neigh{i}{C}$. Thus, since $\vect{x}_{\phi_i}$ may be partially unavailable to $i$, we define its local estimate as $\hat{\vect{x}}_{\phi_i}$, and the corresponding robustness estimate as $\rho^{\psi_i}(\hat{\vect{x}}_{\phi_i}, t)$.
Let $\Bar{\vect{x}}_{\phi_i}$ denote the components of $\hat{\vect{x}}_{\phi_i}$ corresponding to known states, i.e., $x_j$ with $j \in \Neigh{i}{C} \cap \Neigh{i}{T}$, and let $\estimate{\vect{x}}{i}{\phi_i}$ denote those corresponding to estimated states, i.e, $x_j$ with $j \in \Neigh{i}{T} \setminus \Neigh{i}{C}$.

\begin{remark}
    \label{Remark on robustness equivalence under full state knowledge}
    $\rho^{\psi_i}(\vect{x}_{\phi_i},t) = \rho^{\psi_i}(\hat{\vect{x}}_{\phi_i},t)$ under full state knowledge.
\end{remark}
Let $N^{T}_{i} := \sum_{j \in \Neigh{i}{T}} n_j$, $N^{TC}_{i} := \sum_{j \in \Neigh{i}{C}\cap \Neigh{i}{T}} n_j$, and $N^{T\setminus C}_{i} := \sum_{j \in \Neigh{i}{T} \setminus \Neigh{i}{C}} n_j$. Then, $\vect{x}_{\phi_i}, \hat{\vect{x}}_{\phi_i}\in \mathbb{R}^{N^{T}_{i}}$, $\Bar{\vect{x}}_{\phi_i} \in \mathbb{R}^{N^{TC}_{i}}$ and $\estimate{\vect{x}}{i}{\phi_i} \in \mathbb{R}^{N^{T\setminus C}_{i}}$. In the sequel, let us denote by $\vect{x}_{\phi^c_l}= [x^{\top}_{l_1}, \dots x^{\top}_{l_{v_l}}]^{\top}$ the stacked vector of the state of agents in cluster $\mathcal{C}_l$, and by $\Hat{\vect{x}}_{\phi^c_l} = \Bigl[\estimate{\vect{x}}{{l_1} \ \top}{{\phi_{l_1}}}, \dots, \estimate{\vect{x}}{{l_{v_l}} \ \top}{{\phi_{l_{v_l}}}} \Bigr]^\top $ the one of the state estimates performed by agent $l_i \in \mathcal{V}_l$ regarding agents $j \in \Neigh{l_i}{T} \setminus \Neigh{l_i}{C}$, i.e., of the agents involved in $\phi_{l_i}$ but not in communication with $l_i$ itself.

To simplify the notation, we assume without loss of generality that $n_i=1$ for all $i\in \mathcal{V}$ in the following sections. Nonetheless, the results can be extended to higher dimensions by appropriate use of the Kronecker product.

\subsection{State observer}
\label{Section: Distributed observer}
To enable each agent $i \in \mathcal{V}$ to estimate the state of agents $j \in \Neigh{i}{T} \setminus \Neigh{i}{C}$, we employ the decentralized \mbox{$k$-hop} Prescribed Performance State Observer (\mbox{$k$-hop} PPSO) proposed in \cite{zaccherini2025robustestimationcontrolheterogeneous}, with 
\begin{equation}
    \label{selection of the parameter k}
    k = \max_{i \in \mathcal{V}, j \in (\mathcal{N}^T_i \setminus \mathcal{N}^C_i)} D_{ij},
\end{equation}
where $D_{ij}$ is defined as in Section~\ref{Section: multi-agent system}. 
For clarity, the observer structure is recalled in this subsection.

Denote by $\state{\vect{x}}{i}{} = [\state{x}{\top}{N_1^i}, \dots ,\state{x}{\top}{N^i_{\eta_i}}]^\top$ the vector of states $x_{N^i_{j}}$ of the \ensuremath{k}\text{-hop} neighbors of agent $i$, and let $\estimate{\vect{x}}{i}{} = [\estimate{x}{i \ \top}{N_1^i}, \dots ,\estimate{x}{i \ \top}{N^i_{\eta_i}}]^\top$ be the vector containing their estimate carried out by $i$. The state estimation error is then defined as $\error{\vect{x}}{i}{} =\estimate{\vect{x}}{i}{} -  \state{\vect{x}}{i}{} =[\error{x}{i \ \top}{N_1^i}, \dots ,\error{x}{i \ \top}{N^i_{\eta_i}}]^\top$, where $\error{x}{i}{N_j^i} = \estimate{x}{i}{N_j^i} -\state{x}{}{N_j^i}$ for all $N_j^i \in \Nhop{i}{k}$. 
% For all $i \in  \mathcal{V}$ and $r \in \Nhop{i}{k}$, let $\delta_{r}^i(t)$ and $\rho_{r}^i(t)$ be prescribed performance functions satisfying $\lVert\vect{\rho}_{r}(t)\rVert  \leq \lambda_{\text{min}}(\kc{M}{r}) \min_{i \in \Nhop{r}{k}}\{\delta_{r}^{i}(t)\}$, with $\vect{\rho}_{r}(t)= \Bigl[\rho_{r}^{N^r_1}(t),\dots , \rho_{r}^{N^r_{\eta_r}}(t)\Bigr]^\top$ \red{and $\kc{M}{r} = \kc{L}{r} + \kc{H}{r}$, where $\kc{L}{r}$ is the Laplacian matrix of the sub-graph $\mathcal{G}_C^r = ( \Nhop{r}{k}, \mathcal{E}_C^r)$ induced by the \mbox{$k$-hop} neighbors of agent $r$, with $\mathcal{E}_C^r = \{(p,q) \in \mathcal{E}_C: \{p,q\} \subseteq \Nhop{r}{k}\}$, and $\kc{H}{r}:= \text{diag}(|\Neigh{N_1^r}{C}\cap \Neigh{r}{C}|, \dots, |\Neigh{N_{\eta_r}^r}{C}\cap \Neigh{r}{C}|) \in \mathbb{R}^{\eta_r \times \eta_r}$. According to \cite{11018605}, $\lambda_{\text{min}}(\kc{M}{r})> 0$.} 
% Formally, the prescribed performance functions $\delta_{r}^i(t)$ and $\rho_{r}^i(t)$ are defined as follows.
For all $i \in  \mathcal{V}$ and $r \in \Nhop{i}{k}$, let $\delta_{r}^i(t)$ and $\rho_{r}^i(t)$ be prescribed performance functions satisfying $\lVert\vect{\rho}_{r}(t)\rVert  \leq \alpha \ {\min}_{i \in \Nhop{r}{k}}\{\delta_{r}^{i}(t)\}$, where $\vect{\rho}_{r}(t)= \Bigl[\rho_{r}^{N^r_1}(t),\dots , \rho_{r}^{N^r_{\eta_r}}(t)\Bigr]^\top$ and $\alpha \in \mathbb{R}_{> 0}$ is a constant parameter related to the communication graph topology \citep{11018605}. 
Formally, the prescribed performance functions $\delta_{r}^i(t)$ and $\rho_{r}^i(t)$ are defined as follows.
\begin{definition}[\cite{zaccherini2025robustestimationcontrolheterogeneous}]
    \label{definition of prescribed performance function}
    A function $\rho :\mathbb{R}_{\geq 0} \rightarrow \mathbb{R}$ is a prescribed performance function if it satisfies, for all $t \in \mathbb{R}_{\geq 0}$: (i) $\rho(t) \in \mathcal{C}^1$; (ii)  $0 <\rho(t) \leq \overline{\rho}$ for some $ \overline{\rho} <  \infty $ and (iii) $|\dot{\rho}(t)|\leq \dot{\overline{\rho}}$ for some $\dot{\overline{\rho}}<\infty$. 
\end{definition}

For all $i \in \mathcal{V}$ and $ r \in \Nhop{i}{k}$, let $\estimate{x}{i}{r}$ evolve according to
\begin{equation}
    \label{observer dynamics}
    \dot{\hat{x}}_{r}^{i} = - \rho_{r}^i(t)^{-1} J(e_{r}^i) \epsilon_{r}^{i}(t),
\end{equation}
where $J(e_{r}^i) = \ln(\frac{2}{1-(e_{r}^i)^2})$, $e_{r}^i = \rho^{i}_r(t)^{-1}\xi_{r}^{i}$ with $\xi^{i}_{r} = \sum_{l\in(\Neigh{i}{C} \cap \Nhop{r}{k})} (\statediff{\hat{x}}{i}{r}{\hat{x}}{l}) + |\Neigh{i}{C} \cap \Neigh{r}{C}|  (\hat{x}^{i}_r - x_r)$, and $\epsilon_{r}^{i}(t) = \ln(\frac{1+e_{r}^i}{1-e_{r}^i})$. Then, according to \cite[Thm.~2, Rem.~7]{zaccherini2025robustestimationcontrolheterogeneous}, the observer in \eqref{observer dynamics} guarantees $|\error{x}{i}{r}(t)| <\delta^{i}_r(t)$ for all $i \in \mathcal{V}$, $r \in \Nhop{i}{k}$ and $t\in\mathbb{R}_{\geq 0}$.

\begin{remark}
    As reported in \cite[Thm.~2, Rem.~7]{zaccherini2025robustestimationcontrolheterogeneous}, the result above holds as long as {$g_i(x_r)u_i$} is bounded and the dynamics \eqref{eq: agent's dynamic} remain within a bounded set $\mathcal{X}_i \subset \mathbb{R}^{n_i}$. As shown in the proof of Theorem~\ref{Theorem on task satisfacion}, this condition is not restrictive in the proposed framework.
\end{remark}
Once every $i \in \mathcal{V}$ can estimate the state of agents $j \in \mathcal{N}^T_i \setminus \mathcal{N}^C_i$, the control problem can be formulated as follows.
\begin{problem}
    \label{Problem: first problem formulation}
    Consider a heterogeneous MAS \eqref{eq: agent's dynamic} with connected communication graph $\mathcal{G}_C$ and global STL specification $\phi = \land_{i=1}^N \phi_i$, where $\phi_i$ denotes the local STL task of agent $i$. For each $i \in \mathcal{V}$, synthesize a decentralized controller $u_i$ that guarantees satisfaction of $\phi$ using only locally available information.
\end{problem}

\section{Observer-based Task Satisfaction}\label{Robustness adaptation}
As introduced in Section~\ref{Section Signal Temporal Logic}, to ensure satisfaction of the STL formula $\phi_i$, the \mbox{non-temporal} robustness $\rho^{\psi_i}(\vect{x}_{\phi_i})$ must satisfy:
\begin{equation}
    \label{eq: robustness inequality no estimations ivolved}
     -\gamma_{\psi_i}(t) + \rho_{\psi_i}^{\max} < \rho^{\psi_i}(\vect{x}_{\phi_i}) < \rho_{\psi_i}^{\max},
\end{equation}
where $\vect{x}_{\phi_i}$ collects the states involved in task $\phi_i$, and $\gamma_{\psi_i}(t)$ is a positive, decreasing exponential function as in \eqref{prescribed perfomance on the tasks. Preliminary section}, where $\gamma^0_{\psi_i}$, $\gamma^{\infty}_{\psi_i}$, $l_{\phi_i}$, $t_{\phi_i}^*$, $r_{\phi_i}$, and $\rho_{\psi_i}^{\max}$ are defined and tuned as in \cite{10918825}[Remark 23]. 
\begin{remark}
      $\gamma_{\psi_i}(t)$ in \eqref{prescribed perfomance on the tasks. Preliminary section} is a valid prescribed performance function according to Definition~\ref{definition of prescribed performance function}. 
\end{remark}

Since the robustness $\rho^{\psi_i}$ may depend on estimated rather than actual states,  inequality \eqref{eq: robustness inequality no estimations ivolved} cannot be enforced directly. However, since the \mbox{$k$-hop} PPSO introduced in Section~\ref{Section: Distributed observer} guarantees prescribed performances on the state estimation errors regardless of the system trajectory, we can formulate an analogous constraint on $\rho^{\psi_i}(\hat{\vect{x}}_{\phi_i})$ that ensures task satisfaction under the \mbox{worst-case} estimation error.
To this end, we introduce the following assumption.
\begin{assumption}
    \label{assumption on robustness lower bound}
    For all $i \in \mathcal{V}$, there exists a prescribed performance function $\rho_{\psi_i}^t:\mathbb{R}_{\geq 0} \rightarrow \mathbb{R}$, defined as in Definition \ref{definition of prescribed performance function}, such that
    \begin{equation}
        \label{eq: lower bound on robustness base on the estimation error}
        0 < {\rho}^{\psi_i}(\hat{\vect{x}}_{\phi_i}(t)) - \rho_{\psi_i}^t(t) \implies 0 < \rho^{\psi_i}(\vect{x}_{\phi_i}(t)).
    \end{equation} 
\end{assumption}
Assumption~\ref{assumption on robustness lower bound} is not restrictive in practice. Indeed, since: (i) $\vect{x}^i =  \estimate{\vect{x}}{i}{} - \error{\vect{x}}{i}{}$ holds by definition; (ii) $\norm{\error{\vect{x}}{i}{}} < \norm{[\delta^{i}_{N_1^i}(t), \dots, \delta^{i}_{N_{\eta_i}^i}(t)]}$ is guaranteed for all $i \in \mathcal{V}$ by the \mbox{$k$-hop} PPSO in \eqref{observer dynamics}; and (iii) $\norm{[\delta^{i}_{N_1^i}(t), \dots, \delta^{i}_{N_{\eta_i}^i}(t)]}$ is a prescribed performance function \citep[Lemma 2]{zaccherini2025robustestimationcontrolheterogeneous}; Assumption~\ref{assumption on robustness lower bound} holds for common STL tasks.
For clarity, an example of a formation/containment task is presented below.
\begin{example}
    \label{example: funnel modification}
    Consider a path graph with $N=4$ agents and denote the global state as $\vect{x}=[x_1 \ x_2\ x_3\ x_4]^\top \in \mathbb{R}^n$, where $n=4$. Assume $\Nhop{1}{3} =\{3,4\}$ and $\Neigh{1}{} =\{2\}$. Suppose agent $1$ is assigned a task $\phi_1$ with non-temporal robustness function $\rho^{\psi_1}([x_1, x_3]^\top) := r_{13}^2 -\abs{x_1 - x_3 - d_{13}}^2$, where $r_{13}, d_{13} \in \mathbb{R}_{\geq 0}$. Since $\state{x}{}{3} = \estimate{x}{1}{3} - \error{x}{1}{3}$ from definition,  
    \begin{equation}
        \label{eq in example: rewritten stl task}
         r_{13}^2 -\abs{x_1 - \estimate{x}{1}{3} + \error{x}{1}{3} - d_{13}}^2 > 0
    \end{equation}
    holds for all $t \in \mathbb{R}_{\geq 0}$. Then, since $\abs{x_1- \estimate{x}{1}{3} - d_{13} + \error{x}{1}{3}}^2 \leq (\abs{x_1 - \estimate{x}{1}{3}- d_{13}} + \abs{\error{x}{1}{3}})^2$ is valid from triangle inequality, \eqref{eq in example: rewritten stl task} is satisfied if $r_{13}^2 - (\abs{x_1 - \estimate{x}{1}{3}- d_{13}} + \abs{\error{x}{1}{3}})^2 > 0$. Since $r_{13}^2 - \abs{x_1 - \estimate{x}{1}{3}- d_{13}}^2 - \abs{\error{x}{1}{3}}^2 - 2\abs{x_1 - \estimate{x}{1}{3}- d_{13}}\abs{\error{x}{1}{3}} > 0$ holds in $\abs{x_1 - \estimate{x}{1}{3}- d_{13}} \in (0, - \abs{\error{x}{1}{3}} + r_{13})$, by introducing the bound resulting from the \mbox{$k$-hop} PPSO, i.e., $0 < \abs{\error{x}{1}{3}} < \delta_3^1(t)$,
    we derive that the previous inequality is satisfied if $\abs{x_1 - \estimate{x}{1}{3}- d_{13}} < - \delta_3^1(t) + r_{13}$, with $ \delta_3^1(t) < r_{13}$. By squaring both sides, we notice that $0 < \rho^{\psi_1}([x_1, \hat{x}^1_3]^\top) - \rho_{\psi_i}^t$ implies \eqref{eq in example: rewritten stl task} with $\rho^{\psi_1}([x_1, \hat{x}^1_3]^\top) = r_{13}^2 - \abs{x_1 - \estimate{x}{1}{3} -d_{13}}^2$ and $\rho_{\psi_i}^t(t) = 2 \delta_3^1(t)r_{13} -  \delta_3^1(t)^2$. Therefore, Assumption~\ref{assumption on robustness lower bound} holds.
    An analogous statement can be established for $n>1$ using the norm operator.
\end{example}
Given Assumption \ref{assumption on robustness lower bound}, guaranteeing the satisfaction of $\phi_i$ in presence of state estimates reduces to enforcing $-\gamma_{\psi_i}(t) + \rho^{\max}_{\psi_i} < \rho^{\psi_i}(\hat{\vect{x}}_{\phi_i}) - \rho_{\psi_i}^t(t) < \rho^{\max}_{\psi_i} -\rho_{\psi_i}^t(t)$, where $\gamma_{\psi_i}(t)$ is defined as in \eqref{eq: robustness inequality no estimations ivolved} and $\rho_{\psi_i}(\hat{\vect{x}}_{\phi_i})$ as in Section \ref{Section: collaborative and non-collaborative tasks}. Equivalently, the following condition must be enforced with $\Gamma_{\psi_i}(t) = \gamma_{\psi_i}(t) - \rho_{\psi_i}^t(t)$:
\begin{equation}
    \label{eq: inequality on robustness to guarantee task satisfaction, observer not in design stage of gamma}
    -\Gamma_{\psi_i}(t) < \rho^{\psi_i}(\hat{\vect{x}}_{\phi_i}) - \rho^{\max}_{\psi_i} < 0.
\end{equation}
Note that \eqref{eq: inequality on robustness to guarantee task satisfaction, observer not in design stage of gamma} resembles the structure in \eqref{eq: robustness inequality no estimations ivolved}.
Thus, to guarantee the possibility of applying a PPC inspired controller, the following assumption is introduced on $\Gamma_{\psi_i}(t)$.
\begin{assumption}
    \label{Assumption on the positive difference between gamma and observer uncertainties}
    $\Gamma_{\psi_i}(t) > 0$ for all $t\in\mathbb{R}_{\geq 0}$.
\end{assumption}
Since $\gamma_{\psi_i}(t)$ and $\rho_{\psi_i}^t(t)$ are affected by design choices, Assumption \ref{Assumption on the positive difference between gamma and observer uncertainties} is not restrictive in practice. 

Given Assumption~\ref{Assumption on the positive difference between gamma and observer uncertainties}, the following holds for $\Gamma_{\psi_i}(t)$.
\begin{lemma}
    \label{lemma on prescribed performance function}
    $\Gamma_{\psi_i}(t)$ is a prescribed performance function as per Definition \ref{definition of prescribed performance function}.
\end{lemma}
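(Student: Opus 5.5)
We verify the three conditions of Definition~\ref{definition of prescribed performance function} for $\Gamma_{\psi_i}(t) = \gamma_{\psi_i}(t) - \rho_{\psi_i}^t(t)$, using two facts already available: $\gamma_{\psi_i}(t)$ is a prescribed performance function (see the remark following \eqref{eq: robustness inequality no estimations ivolved}), and $\rho_{\psi_i}^t(t)$ is one by Assumption~\ref{assumption on robustness lower bound}. Let $\overline{\gamma},\dot{\overline{\gamma}}$ and $\overline{\rho}^t,\dot{\overline{\rho}}^t$ denote the associated bounds.

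\textbf{Condition (i).} Both $\gamma_{\psi_i}$ and $\rho_{\psi_i}^t$ belong to $\mathcal{C}^1$. Since $\mathcal{C}^1$ is closed under subtraction, $\Gamma_{\psi_i}\in\mathcal{C}^1$ on $\mathbb{R}_{\geq 0}$.

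\textbf{Condition (ii).} Positivity, $\Gamma_{\psi_i}(t)>0$, is exactly Assumption~\ref{Assumption on the positive difference between gamma and observer uncertainties}. For the upper bound, $\rho_{\psi_i}^t(t)>0$ gives
\begin{equation*}
\Gamma_{\psi_i}(t) < \gamma_{\psi_i}(t) \leq \overline{\gamma}<\infty.
\end{equation*}
Hence $\overline{\Gamma}:=\overline{\gamma}$ works. Explicitly, $\overline{\gamma}=\max\{\gamma^0_{\psi_i},\gamma^\infty_{\psi_i}\}$ from the exponential form in \eqref{prescribed perfomance on the tasks. Preliminary section}.

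\textbf{Condition (iii).} By the triangle inequality,
\begin{equation*}
|\dot{\Gamma}_{\psi_i}(t)| \leq |\dot\gamma_{\psi_i}(t)| + |\dot\rho_{\psi_i}^t(t)| \leq \dot{\overline{\gamma}} + \dot{\overline{\rho}}^t .
\end{equation*}
Here $\dot{\overline{\gamma}} = l\,|\gamma^0_{\psi_i}-\gamma^\infty_{\psi_i}|$ follows directly from the explicit form of $\gamma_{\psi_i}$. This gives a finite derivative bound for $\Gamma_{\psi_i}$.

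\textbf{Main obstacle.} The only nontrivial point is the strict lower bound $\Gamma_{\psi_i}(t)>0$. A difference of two prescribed performance functions need not be positive, so this property cannot be derived from the definitions and must come from Assumption~\ref{Assumption on the positive difference between gamma and observer uncertainties}. With that assumption invoked, the remaining conditions are routine.
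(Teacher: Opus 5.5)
Your proposal is correct and matches the paper's proof essentially step for step: (i) follows from closure of $\mathcal{C}^1$ under subtraction, (ii) from Assumption~\ref{Assumption on the positive difference between gamma and observer uncertainties} together with $\rho_{\psi_i}^t>0$, giving $0<\Gamma_{\psi_i}(t)<\overline{\gamma}$, and (iii) from the triangle inequality on the derivatives. The explicit constants you give for $\overline{\gamma}$ and $\dot{\overline{\gamma}}$ from the exponential form of $\gamma_{\psi_i}$ are correct, and are an extra detail the paper does not spell out.
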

\begin{pf}
    $\gamma_{\psi_i}(t)$ and $\rho_{\psi_i}^t(t)$ are prescribed performance functions. Thus, validity of (i)–(iii) in Definition~\ref{definition of prescribed performance function} can be established for $\Gamma_{\psi_i}(t)$ as follows. (i) $\gamma_{\psi_i}(t), \rho_{\psi_i}^t(t) \in \mathcal{C}^1$. Thus, $\Gamma_{\psi_i}(t) \in \mathcal{C}^1$. (ii) From Definition~\ref{definition of prescribed performance function}, $0 \leq \gamma_{\psi_i}(t) \leq \overline{\gamma}_{\psi_i}$ and $0 \leq \rho_{\psi_i}^t(t) \leq \overline{\rho}_{\psi_i}^t$ hold for some bounded $\overline{\gamma}_{\psi_i}$ and $\overline{\rho}_{\psi_i}^t \in \mathbb{R}_{> 0}$. Hence, under Assumption \ref{Assumption on the positive difference between gamma and observer uncertainties}, $0<\Gamma_{\psi_i}(t) < \overline{\gamma}_{\psi_i}$. (iii) $|\dot{\gamma}_{\psi_i}(t)|\leq \dot{\overline{\gamma}}_{\psi_i}$ and $|\dot{\rho}_{\psi_i}^t|\leq \dot{\overline{\rho}}_{\psi_i}^t$ hold for some $\dot{\overline{\gamma}}_{\psi_i}<\infty$ and $\dot{\overline{\rho}}_{\psi_i}^t < \infty$. Thus $|\dot{\Gamma}^{\psi_i}(t)| \leq |\dot{\gamma}_{\psi_i}(t)| + |\dot{\rho}_{\psi_i}^t|\leq \dot{\overline{\gamma}}_{\psi_i} + \dot{\overline{\rho}}_{\psi_i}^t$. {\hspace*{\fill}$\Box$\vspace{1ex}}
\end{pf}
\begin{remark}
    For collaborative tasks $\phi_i$ for which $\Neigh{i}{T} \setminus \Neigh{i}{C} = \emptyset$, i.e., tasks involving only communicating agents, $\hat{\vect{x}}_{\phi_i} \equiv \vect{x}_{\phi_i}$ holds. Consequently, $\rho_{\psi_i}^t(t) = 0$ for all $t$ and condition \eqref{eq: inequality on robustness to guarantee task satisfaction, observer not in design stage of gamma} reduces to \eqref{eq: robustness inequality no estimations ivolved}.
\end{remark}
To handle tasks $\phi_i$ of the form \eqref{eq: definition of temporal formula}, where the \mbox{non-temporal} formula $\psi_i$ is obtained as the conjunction of $p_i$ formulas, i.e., $\psi_i = \land_{j = 1}^{p_i} \psi_{i,j}$, we exploit the following properties of the always ($G$) and eventually ($F$) temporal operators:
\begin{equation}
    \label{eq: always operator properties}
    \begin{split}
        G_{[a,b]}\psi_i &= \land_{j = 1}^{p_i} G_{[a,b]} \psi_{i,j}, \\ F_{[a,b]}\psi_i &= G_{[\tau,\tau]} \psi_i \text{ with } \tau \in [a,b].
    \end{split}
\end{equation}
Given the validity of \eqref{eq: always operator properties}, each $\gamma_{\psi_{i,j}}(t)$ can be designed and tuned individually to obtain $\Gamma_{\psi_{i,j}}(t)$ for all $j \in \{1,\dots, p_i\}$. The satisfaction of $\psi_{i}$ is then enforced by constraining $ - \Gamma_{\psi_{i}}(t) < \rho^{\psi_i}(\hat{\vect{x}}_{\phi_i}) - \rho^{\max}_{\psi_i} < 0$ as in \eqref{eq: inequality on robustness to guarantee task satisfaction, observer not in design stage of gamma}, with $\Gamma_{\psi_{i}}(t) = \min_{j \in \{1, \dots, p_i\}} \Gamma_{\psi_{i,j}}(t) \geq -\frac{1}{\eta} \ln(\sum_{j = 1}^{p_i}\exp(-\eta \Gamma_{\psi_{i,j}}(t))$ and robustness $\rho^{\psi_i}(\hat{\vect{x}}_{\phi_i}) = \min_{j \in \{1, \dots, p_i\}} \rho^{\psi_{i,j}}(\hat{\vect{x}}_{\phi_i}) \geq -\frac{1}{\eta} \ln(\sum_{j = 1}^{p_i}\exp(-\eta \rho^{\psi_{i,j}}(\hat{\vect{x}}_{\phi_i}))$.

Based on \eqref{eq: inequality on robustness to guarantee task satisfaction, observer not in design stage of gamma}, Problem~\ref{Problem: first problem formulation} can be reformulated as:
\begin{problem}
    \label{Problem: second problem formulation}
    Consider a heterogeneous MAS \eqref{eq: agent's dynamic} with a \mbox{cluster-induced} graph $\mathcal{G}'$ as in Definition~\ref{def: cluster induced graph}, where each agent employs the $k$-hop PPSO \eqref{observer dynamics}. Design a control law $u_i$ that guarantees \eqref{eq: inequality on robustness to guarantee task satisfaction, observer not in design stage of gamma} for all $i \in \mathcal{V}$ and all $t \in \mathbb{R}_{\geq 0}$.
\end{problem}

\section{Decentralized Observer-based Controller}\label{Section: Decentralized controller}
Inspired by \cite{LINDEMANN2021100973,10918825}, this section leverages the concept of Prescribed Performance Control (PPC) \citep{4639441} to address Problem \ref{Problem: second problem formulation}.
% It is subsequently shown that the satisfaction of the local specifications guarantees the fulfillment of the global task.

%\subsection{Local controller design}
Under Assumption~ \ref{Assumption on the positive difference between gamma and observer uncertainties}, define the normalized error $e^{\psi_i}(\hat{\vect{x}}_{\phi_i}, t)$ as $e^{\psi_i}(\hat{\vect{x}}_{\phi_i}, t) = \Gamma_{\psi_i}^{-1}(t) (\rho^{\psi_i}(\hat{\vect{x}}_{\phi_i}) - \rho^{\max}_{\psi_i})$.
%Based on $e^{\psi_i}(\hat{\vect{x}}_{\phi_i}, t)$, the transformed error is defined as
The transformed error $\epsilon^{\psi_i}(\hat{\vect{x}}_{\phi_i}, t)$ is then defined as
\begin{equation}
    \label{eq: transformed erro definition}
    \begin{split}
    \epsilon^{\psi_i}(\hat{\vect{x}}_{\phi_i}, t) &= T_{\psi_i}(e^{\psi_i}(\hat{\vect{x}}_{\phi_i}, t)),
    \end{split}
\end{equation}
where $T_{\psi_i}: (-1,0) \rightarrow \mathbb{R}$ is a strictly increasing transformation given by $T_{\psi_i}(e^{\psi_i}(\hat{\vect{x}}_{\phi_i}, t)) = \ln\left(-\frac{e^{\psi_i}(\hat{\vect{x}}_{\phi_i}, t)+1}{e^{\psi_i}(\hat{\vect{x}}_{\phi_i}, t)}\right)$.
Furthermore, denote by $J_{\psi_i}(e^{\psi_i})= \frac{\partial{T_{\psi_i}}}{\partial e^{\psi_i}} = - \frac{1}{e^{\psi_i}(e^{\psi_i}+1)}$ the Jacobian of $T_{\psi_i}$. 
%Note that $J_{\psi_i}(e_{\psi_i}) > 0$ by definition.
The main idea of PPC is to design control policies $u_i$ that keep $\epsilon^{\psi_i}(\hat{\vect{x}}_{\phi_i}, t)$ bounded, thereby ensuring satisfaction of \eqref{eq: inequality on robustness to guarantee task satisfaction, observer not in design stage of gamma}. To this end, consider the following assumption on $\rho^{\psi_i}$.

\begin{assumption}
    \label{Assumption on concavity and weelposeness of the robustness function}
    Each formula $\psi_i$ is such that: (i) $\rho^{\psi_i}: \mathbb{R}^{N_i^T} \rightarrow \mathbb{R}$ is concave with  $\frac{\partial \rho^{\psi_i}(\vect{x}_{\phi_i})}{\partial x_{i}} = 0_n$ only at the global maximum (ii) $\rho^{\psi_i}$ is continuously differentiable with respect to $\vect{x}_{\phi_i}$ or smooth almost everywhere except from the global maximum and (iii) the formula is \mbox{well-posed} in the sense that for all $C \in \mathbb{R}$ there exists $0 \leq \Bar{C} < \infty$ such that for all $\vect{x}_{\phi_i}$ with $\rho^{\psi_i}(\vect{x}_{\phi_i}) \geq C$, $\norm{\vect{x}_{\phi_i}}\leq \Bar{C}$ holds.
\end{assumption}

% \red{Since linear functions and those associated with containment or formation tasks are concave, condition (i) of Assumption~\ref{Assumption on concavity and weelposeness of the robustness function} is not restrictive.
% Moreover, since under proper task allocation and convergent observer adding $\psi^{\text{Bound}}_i := (\norm{\Bar{\vect{x}}_{\phi_i}}<\Bar{C})$ to each $\psi_i$ guarantees boundedness of all the estimates of $\Bar{\vect{x}}_{\phi_i}$, it is always possible to impose (iii) by combining $\phi_i$ with $\psi^{\text{Bound}}_i$, where $\Bar{C}$ is selected sufficiently large to preserve feasibility. Therefore, (iii) of Assumption~\ref{Assumption on concavity and weelposeness of the robustness function} is also not stringent.}
\begin{remark}
    \label{Remark: weel-posenes of robustness}
    Since linear functions and those modeling containment or formation tasks are concave, condition (i) of Assumption~\ref{Assumption on concavity and weelposeness of the robustness function} is not restrictive.
    Moreover, each $\phi_i$ can be augmented with $\psi^{\text{Bound}}_i := (\norm{\vect{x}_{\phi_i}}<\Bar{C}_i)$, where choosing $\Bar{C}_i$ sufficiently large to preserve feasibility ensures that condition (iii) of Assumption~\ref{Assumption on concavity and weelposeness of the robustness function} is satisfied under proper task assignment and a convergent observer. For \mbox{collaborative} tasks, only $\Bar{\vect{x}}_{\phi_i}$ can be directly bounded by $\psi^{\text{Bound}}_i$. Nevertheless, as shown in Theorem~\ref{Theorem on task satisfacion}, under Assumption~\ref{Assumption on Task graph} and a convergent observer, it suffices to add $\psi^{\text{Bound}}_j := (\norm{\Bar{\vect{x}}_{\phi_j}}<\Bar{C}_j)$ to each $\psi_j$ that do not satisfy (iii) in order to guarantee boundedness of $\norm{\hat{\vect{x}}_{\phi_i}}$, and thus ensure that condition (iii) holds.
\end{remark}
Define the global optimum as $\rho^{\text{opt}}_{\psi_i} := \text{sup}_{\vect{x}_{\phi_i}} \rho^{\psi_i}(\vect{x}_{\phi_i})$. Given the concavity of $\rho^{\psi_i}$ from Assumption \ref{Assumption on concavity and weelposeness of the robustness function}, to avoid $\frac{\partial\rho^{\psi_i}(\vect{x}_{\phi_i})}{\partial x_{i}} = 0_n$ along the transient, and thus feasibility issue, it suffices to tune $\rho^{\max}_{\psi_i}$ so that $0 < \rho^{\max}_{\psi_i} <\rho^{\text{opt}}_{\psi_i}$.

For given observer prescribed performances $\delta_{N^i_j}^i(t)$, with $N^i_j \in \Nhop{i}{k}$, ensuring the satisfiability of $\phi_i$ regardless of $\hat{\vect{x}}_{\phi_i}$, for every $i \in \mathcal{V}$, requires the following.
% Denote by $\rho^{\text{opt}}_{\psi_i}$ the global maximum of the robustness $\rho^{\psi_i}(\vect{x}_{\phi_i})$, i.e., $\rho^{\text{opt}}_{\psi_i} := \text{sup}_{\vect{x}_{\phi_i}} \rho^{\psi_i}(\vect{x}_{\phi_i})$. Given the concavity from Assumption~\ref{Assumption on concavity and weelposeness of the robustness function}, the optimum point is unique and $\frac{\partial \rho^{\psi_{i}}(\vect{x}_{\phi_i})}{\partial x_{i}} = 0$ holds only when $\rho^{\psi_i}(\vect{x}_{\phi_i}) = \rho^{\text{opt}}_{\psi_i}$. Thus, to guarantee $\frac{\partial\rho^{\psi_i}(\vect{x}_{\phi_i})}{\partial x_{i}} \neq 0_n$ along the transients, it suffices to tune $\rho^{\max}_{\psi_i}$ so that $0 < \rho^{\max}_{\psi_i} <\rho^{\text{opt}}_{\psi_i}$.
% For given observer prescribed performances $\delta_{N^i_j}^i(t)$, with $N^i_j \in \Nhop{i}{k}$, ensuring the satisfiability of $\phi_i$ regardless of $\hat{\vect{x}}_{\phi_i}$, for every $i \in \mathcal{V}$, requires the following.
\begin{assumption}
    \label{Assumption: optimal robustness greater than zero}
    $\rho^{\max}_{\psi_i} - \max_{\tau \in T_{\text{wdw}}} \rho_{\psi_i}^t(\tau) > 0$ for all $i \in \mathcal{V}$, where $T_{\text{wdw}}$ represents the time window over which $\phi_i$ must be satisfied.
\end{assumption}

Assumption~\ref{Assumption: optimal robustness greater than zero} represents a feasibility condition for given $\delta_{N^i_j}^i(t)$ and $\rho^{\max}_{\psi_i}$. However, since $\rho_{\psi_i}^t(t)$ depends on $\delta_{N^i_j}^i(t)$, which is a design choice, Assumption~\ref{Assumption: optimal robustness greater than zero} can be satisfied whenever $\rho_{\text{opt}}^{\psi_i} > 0$ by appropriate design of $\delta_{N^i_j}^i(t)$ and $\rho^{\max}_{\psi_i}$. Thus, the assumption is not restrictive.
For always operators $G_{[a,b]} \phi_i$, $T_{\text{wdw}} = [a,b]$, while for eventually operators $F_{[a,b]} \phi_i$, $T_{\text{wdw}} = \{t^*\}$, where $t^* \in [a,b]$ is the time instant of task satisfaction, as per Section~\ref{Robustness adaptation} and \cite{LINDEMANN2021100973}. 
To present our main result, an additional assumption is required.
\begin{assumption}
    \label{Assumption on initialization}
    %The states and state estimates are initialized so that 
    $-\Gamma_{\psi_i}(0) < {\rho}^{\psi_i}(\hat{\vect{x}}_{\phi_i}(0)) - \rho^{\max}_{\psi_i} < 0$ and $|\xi^{i}_{N^i_j}(0)| < \rho_{N_j^i}^i(0)$ hold for all $i \in \mathcal{V}$ and $N_j^i \in \Nhop{i}{k}$.
\end{assumption}
Assumption \ref{Assumption on initialization} involves only design parameters and initialization quantities. Therefore, it's not restrictive.
With the defined notation and assumptions, we are now ready to state the main result of this work.

\begin{thm}
    \label{Theorem on task satisfacion}
    Consider a MAS \eqref{eq: agent's dynamic} with communication graph $\mathcal{G}_C$, and task graph $\mathcal{G}_T$ induced by a global task $\phi =\land_{i=1}^N \phi_i$. Suppose each agent runs the $k$-hop PPSO in \eqref{observer dynamics} with $k$ selected according to \eqref{selection of the parameter k}. Consider the agents belonging to the cluster $\mathcal{C}_l$, subject to the task $\phi^{c}_l = \land_{l_i \in \mathcal{V}_l} \phi_{l_i}$. Then, $\phi^c_l$ is satisfied, for all $l \in \mathcal{L}$, if Assumptions \ref{Assumption on neighbors}, \ref{Assumption on Task graph}, \ref{Assumption on the non-existance of a task not in communication inside a cluster}, \ref{assumption on robustness lower bound}, \ref{Assumption on the positive difference between gamma and observer uncertainties}, \ref{Assumption on concavity and weelposeness of the robustness function}, \ref{Assumption: optimal robustness greater than zero}, \ref{Assumption on initialization} hold and each agent $l_i\in \mathcal{V}_l$ applies the decentralized controller
    \begin{equation}
        \label{eq: agent's input}
        u_{l_i} = - g^{\top}_{l_i}(x_{l_i}(t))\sum_{l_j\in \mathcal{V}_l} \frac{\partial {\rho}^{\psi_{l_j}}(\hat{\vect{x}}_{\phi_{l_j}})}{\partial x_{{l_i}}} \Gamma^{-1}_{\psi_{l_j}}  J_{\psi_{l_j}} \epsilon^{\psi_{l_j}},
    \end{equation}
    where $\Gamma_{\psi_{l_j}}$ is defined as in \eqref{eq: inequality on robustness to guarantee task satisfaction, observer not in design stage of gamma}, and $J_{\psi_{l_j}}$, $\epsilon^{\psi_{l_j}}$ as in \eqref{eq: transformed erro definition}.
\end{thm}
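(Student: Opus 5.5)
The argument follows the standard PPC route of \cite{LINDEMANN2021100973,10918825}, combined with an induction over the cluster-induced graph $\mathcal{G}'$. By Remark~\ref{remark on the acyclicity of the cluster induced graph}, $\mathcal{G}'$ is a DAG, so we can order the clusters topologically and treat the \emph{leaf} clusters first. These are clusters with no outgoing edges in $\mathcal{G}'$; by Assumption~\ref{Assumption on the non-existance of a task not in communication inside a cluster}, every state entering $\hat{\vect{x}}_{\phi_{l_i}}$ for them is actually known, and $\rho^t_{\psi_{l_i}}\equiv 0$. For a generic cluster $\mathcal{C}_l$ we then assume, as the induction hypothesis, that every cluster it points to already has bounded state trajectories. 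The $k$-hop PPSO guarantees $|\error{x}{l_i}{r}|<\delta^{l_i}_r(t)$ (Section~\ref{Section: Distributed observer}), so the estimates $\hat{\vect{x}}_{\phi_{l_i}}$ of those external agents are also bounded. This settles the remark after the observer: the external agents stay in a bounded set with bounded inputs, so the observer guarantee is legitimately available.

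Inside the cluster, I stack $\vect{\epsilon}_l=[\epsilon^{\psi_{l_1}},\dots,\epsilon^{\psi_{l_{v_l}}}]^\top$ and define the open set $\Omega_l=\{(\vect{x}_{\phi^c_l},t): e^{\psi_{l_j}}\in(-1,0)\ \forall l_j\}$. Assumption~\ref{Assumption on initialization} makes $\Omega_l$ nonempty at $t=0$. The closed loop is locally Lipschitz in $\vect{x}$ and piecewise continuous in $t$ on $\Omega_l$; the external estimates enter as continuous exogenous signals generated by the observer, which is well-posed by \cite{zaccherini2025robustestimationcontrolheterogeneous}. Hence a maximal solution exists on some interval $[0,\tau_{\max})$ and remains in $\Omega_l$ there.

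Along this solution I use the Lyapunov candidate $V=\tfrac12\vect{\epsilon}_l^\top\vect{\epsilon}_l$. Differentiating gives
$\dot\epsilon^{\psi_{l_j}}=J_{\psi_{l_j}}\Gamma^{-1}_{\psi_{l_j}}\big(\sum_{q}\tfrac{\partial\rho^{\psi_{l_j}}}{\partial \hat x_q}\dot{\hat x}_q - e^{\psi_{l_j}}\dot\Gamma_{\psi_{l_j}}\big)$.
I split the $\dot{\hat x}_q$ terms into two groups. The first group contains agents in the cluster, for which $\dot{\hat x}_q=\dot x_q=f_q+g_qu_q+w_q$. The second group contains estimates of external agents, for which $\dot{\hat x}_q$ is given by \eqref{observer dynamics}; it is bounded because $|e^{l_j}_q|$ stays strictly inside $(-1,1)$ by \cite[Thm.~2]{zaccherini2025robustestimationcontrolheterogeneous}. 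Substituting \eqref{eq: agent's input} into the first group produces $-\vect{\epsilon}_l^\top D\,\mathcal{R}\,G G^\top\mathcal{R}^\top D\,\vect{\epsilon}_l$, where $D=\text{diag}(J_{\psi_{l_j}}\Gamma^{-1}_{\psi_{l_j}})$ and $\mathcal{R}$ stacks the gradients with respect to the in-cluster states. Everything else is bounded by $\norm{D\vect{\epsilon}_l}\,B$ for some constant $B$. To get such a $B$ I use three facts: $f_q,w_q$ are bounded on the compact set given by Assumption~\ref{Assumption on concavity and weelposeness of the robustness function}(iii), since $\rho^{\psi}\ge\rho^{\max}-\Gamma(0)$ on $\Omega_l$; $\dot\Gamma$ and $\Gamma^{-1}$ are bounded by Lemma~\ref{lemma on prescribed performance function} together with Assumption~\ref{Assumption on the positive difference between gamma and observer uncertainties}; and $|e|<1$ on $\Omega_l$.

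The main obstacle is a uniform positive lower bound on $\lambda_{\min}(\mathcal{R}GG^\top\mathcal{R}^\top)$. I would establish it in four steps. First, by Assumption~\ref{Assumption on Task graph}, tasks inside a cluster are ordered acyclically, so after permuting agents the matrix $\mathcal{R}$ is triangular, with diagonal entries $\partial\rho^{\psi_{l_j}}/\partial x_{l_j}$. Second, these diagonal entries are nonzero on $\Omega_l$, because $e<0$ gives $\rho<\rho^{\max}<\rho^{\text{opt}}$, and Assumption~\ref{Assumption on concavity and weelposeness of the robustness function}(i) says the gradient vanishes only at the optimum. Third, by continuity on the compact closure of $\{\rho^{\max}-\bar\Gamma\le\rho\le\rho^{\max}\}$, those entries are bounded away from zero. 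Fourth, Assumption~\ref{Assumption on existence of a solution}(ii) bounds $GG^\top$ from below on compacts. Combining these yields $\dot V\le -c\norm{D\vect{\epsilon}_l}^2+B\norm{D\vect{\epsilon}_l}$. Because $|J|\ge 4$ and $\Gamma^{-1}$ is bounded below, this gives $\dot V<0$ whenever $\norm{\vect{\epsilon}_l}$ exceeds a fixed constant, so $\vect{\epsilon}_l$ stays bounded on $[0,\tau_{\max})$.

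Boundedness of $\vect{\epsilon}_l$ means each $e^{\psi_{l_j}}$ stays in a compact subset of $(-1,0)$. The usual contradiction argument (\cite[Thm.~C.3.1]{sontag} style: a solution cannot leave every compact subset of $\Omega_l$) then gives $\tau_{\max}=\infty$. The states of $\mathcal{C}_l$ are bounded and the inputs $u_{l_i}$ are bounded, which closes the induction step for the clusters upstream of $\mathcal{C}_l$. Finally, \eqref{eq: inequality on robustness to guarantee task satisfaction, observer not in design stage of gamma} holds for all $t$. Assumption~\ref{assumption on robustness lower bound}, together with the design of $\gamma_{\psi_{l_i}}$ and Assumption~\ref{Assumption: optimal robustness greater than zero}, then gives \eqref{eq: robustness inequality no estimations ivolved}, hence $\rho^{\phi_{l_i}}>0$, and so $\phi^c_l$ is satisfied for every $l\in\mathcal{L}$.
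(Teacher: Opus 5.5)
Your proposal is correct and follows essentially the same route as the paper: induction from leaf clusters to root clusters of the DAG $\mathcal{G}'$, the Lyapunov function $V=\frac{1}{2}\vect{\epsilon}^{\top}\vect{\epsilon}$, triangularity of the stacked gradient matrix from Assumption~\ref{Assumption on Task graph}, and boundedness of the drift, disturbance and estimate-derivative terms via well-posedness and the $k$-hop PPSO guarantees. Your additions (the maximal-solution argument, compactness to get a uniform lower bound on $\lambda_{\min}$, and a direct ultimate-boundedness step in place of the paper's $S=1-\exp(-V)$ construction) only make explicit what the paper leaves implicit; the one fix needed is to use $\sup_t \Gamma_{\psi}(t)$ instead of $\Gamma(0)$ in the well-posedness bound, because $\Gamma=\gamma-\rho^t$ need not be non-increasing for non-leaf clusters (this is why the paper works with $t_{\text{max}}$).
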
 
\begin{pf}
    The proof is structured in two steps. In Step~A, task satisfaction is established for clusters $\mathcal{C}_l \in \mathcal{G}'$ corresponding to leaf nodes in $\mathcal{G}'$. In Step B, observer convergence and Assumption \ref{Assumption on Task graph} are used to prove task satisfaction for all remaining clusters.
    
    Let's start by computing the transformed error dynamics. For notational simplicity, $\hat{\rho}^{\psi_i} := \rho^{\psi_i}(\hat{\vect{x}}_{\phi_i})$ and $\rho^{\psi_i} := \rho^{\psi_i}(\vect{x}_{\phi_i})$ in the following.  
    Given the definition in \eqref{eq: transformed erro definition}, $\dot{\epsilon}^{\psi_i}(\hat{\vect{x}}_{\phi_i}, t) = -\frac{1}{e^{\psi_i}(1+e^{\psi_i})}\dot{e}^{\psi_i}$ with $\dot{e}^{\psi_i} =  \Gamma_{\psi_i}^{-1}\Bigl(\frac{\partial \hat{\rho}^{\psi_i}}{\partial \Bar{\vect{x}}_{\phi_i}}^{\top} \dot{\Bar{\vect{x}}}_{\phi_i} + \frac{\partial \hat{\rho}^{\psi_i}}{\partial \estimate{\vect{x}}{i}{{\phi_i}}}^{\top}  \dot{\Hat{\vect{x}}}^{i}_{\phi_i} - \dot{\Gamma}_{\psi_i}(t)e^{\psi_i}\Bigr)$, 
    % Thus,
    % \begin{equation}
    %     \label{eq: transformed normalized error dynamics}
    %     \begin{split}
    %         \dot{\epsilon}^{\psi_i} = & J_{\psi_i}\Gamma^{-1}_{\psi_i}\biggl(\frac{\partial \hat{\rho}^{\psi_i}}{\partial \Bar{\vect{x}}_{\phi_i}}^{\top}  \dot{\Bar{\vect{x}}}_{\phi_i}  + \frac{\partial \hat{\rho}^{\psi_i}}{\partial \estimate{\vect{x}}{i}{{\phi_i}}}^{\top}  \dot{\Hat{\vect{x}}}^{i}_{\phi_i}  -\dot{\Gamma}_{\psi_i}e^{\psi_i}\biggr),
    %     \end{split}
    % \end{equation}
    where $\frac{\partial \hat{\rho}^{\psi_i}}{\partial \Bar{\vect{x}}_{\phi_i}} \in \mathbb{R}^{N^{TC}_{i}}$ and $\frac{\partial \hat{\rho}^{\psi_i}}{\partial \estimate{\vect{x}}{i}{{\phi_i}}} \in \mathbb{R}^{N^{T\setminus C}_{i}}$ are vectors containing, respectively, the derivatives of $\hat{\rho}^{\psi_i}$ with respect to the states of agents $j\in \Neigh{i}{T} \cap \Neigh{i}{C}$ and $j\in \Neigh{i}{T} \setminus \Neigh{i}{C}$, and where $\dot{\Bar{\vect{x}}}_{\phi_i}$ and $\dot{\Hat{\vect{x}}}^{i}_{\phi_i}$ denote the dynamics of $x_j$, $j\in \Neigh{i}{T} \cap \Neigh{i}{C}$, and the one of $\hat{x}^i_j$, $j\in \Neigh{i}{T} \setminus \Neigh{i}{C}$, when the estimate is performed by $i$.  
    As introduced in Section \ref{Section: collaborative and non-collaborative tasks}, $\Bar{\vect{x}}_{\phi_i}$ is the vector containing the true state of agents  $j\in \Neigh{i}{T} \cap \Neigh{i}{C}$. Thus, $\dot{\epsilon}^{\psi_i}$ can be rewritten as:
    \begin{equation}
        \label{eq: transformed normalized error dynamics}
        \begin{split}
            \dot{\epsilon}^{\psi_i} = & J_{\psi_i}\Gamma^{-1}_{\psi_i}\biggl(\frac{\partial \hat{\rho}^{\psi_i}}{\partial {\vect{x}}_{\phi_i}}^{\top}  \dot{{\vect{x}}}_{\phi_i}  + \frac{\partial \hat{\rho}^{\psi_i}}{\partial \estimate{\vect{x}}{i}{{\phi_i}}}^{\top}  \dot{\Hat{\vect{x}}}^{i}_{\phi_i}  -\dot{\Gamma}_{\psi_i}e^{\psi_i}\biggr),
        \end{split}
    \end{equation}
    where $\frac{\partial \hat{\rho}^{\psi_i}}{\partial {\vect{x}}_{\phi_i}}$ is the derivative of $\hat{\rho}^{\psi_i}$ with respect to the state of the agents in cluster $\mathcal{C}_l  = (\mathcal{V}_l, \mathcal{E}_l)$, with $i \in \mathcal{V}_l$. Note that, from condition (i) in Assumption \ref{Assumption on concavity and weelposeness of the robustness function}, and since $0 < \rho^{\max}_{\psi_i} <\rho^{\text{opt}}_{\psi_i}$ holds for all $i \in \mathcal{V}$, the only components of $\frac{\partial \hat{\rho}^{\psi_i}}{\partial {\vect{x}}_{\phi_i}}$ equal to zero are those associated to agents $j \notin \Neigh{i}{T}$ and $j \in \Neigh{i}{T} \setminus \Neigh{i}{C}$.
    By stacking the transformed errors of cluster $\mathcal{C}_l$, i.e., $\epsilon^{\psi_{l_i}}$ for all $l_i \in \mathcal{V}_l$, we define the transformed error vector  $\vect{\epsilon}^{\phi^c_l} := [\epsilon^{\psi_{l_1}}, \dots, \epsilon^{\psi_{l_{v_l}}}]^{\top}$  with dynamics
    \begin{equation}
        \hspace{-0.2cm}
        \label{eq: cluster transformed normalized error dynamics}
        \dot{ \vect{\epsilon}}^{\phi^c_l} = \vect{J}_{\phi^c_l} \vect{\Gamma}^{-1}_{\phi^c_l}  \{ \vect{\Lambda}_{\vect{x}_{\phi^c_l}}^{\top} \dot{\vect{x}}_{\phi^c_l} + \vect{\Lambda}_{\Hat{\vect{x}}_{\phi^c_l}}^{\top}\dot{\Hat{\vect{x}}}_{\phi^c_l} - \dot{\vect{\Gamma}}_{\phi^c_l} \vect{e}^{\phi^c_l}\},
    \end{equation}
    where $\vect{x}_{\phi^c_l} = [x^\top_{l_1}, \dots,  x^\top_{l_{v_l}}]^{\top}$ and $\Hat{\vect{x}}_{\phi^c_l} = [\estimate{\vect{x}}{{l_1 \top}}{{\phi_{l_1}}}, \dots , \estimate{\vect{x}}{{l_{v_l}} \top}{{\phi_{l_{v_l}}}}]^\top $ are defined as in Section \ref{Section: collaborative and non-collaborative tasks}, $\vect{J}_{\phi^c_l} = \text{diag}(J_{\psi_{l_1}}, \dots, J_{\psi_{l_{v_l}}} )$, $\vect{\Gamma}_{\phi^c_l} = \text{diag}(\Gamma_{\psi_{l_1}}, \dots, \Gamma_{\psi_{l_{v_l}}})$,  $\vect{\Lambda}_{\vect{x}_{\phi^c_l}} = \Bigl[ \frac{\partial \hat{\rho}^{\psi_{l_1}}}{\partial \vect{x}_{\phi^c_l}} \ \dots \  \frac{\partial \hat{\rho}^{\psi_{l_{v_l}}}}{\partial \vect{x}_{\phi^c_l}} \Bigr]$, 
      $\vect{\Lambda}_{\Hat{\vect{x}}_{\phi^c_l}} = \text{diag}\Bigl( \frac{\partial \hat{\rho}^{\psi_{l_1}}}{\partial \estimate{\vect{x}}{{l_1}}{{\phi_{l_1}}}}, \dots , \frac{\partial \hat{\rho}^{\psi_{l_{v_l}}}}{\partial \estimate{\vect{x}}{l_{v_l}}{{\phi_{l_{v_l}}}}}\Bigr)$ and $\vect{e}^{\phi^c_l} = [e^{\psi_{l_1}}, \dots,  e^{\psi_{l_{v_l}}}]^{\top}$.
    Given the definition of $\vect{x}_{\phi^c_l}$ from Section \ref{Section: collaborative and non-collaborative tasks}, its dynamics can be written as:
    \begin{equation}
        \label{eq: cluster state dynamcis}
        \dot{\vect{x}}_{\phi^c_l} = \vect{f}_{\phi^c_l}(\vect{x}_{\phi^c_l}) +\vect{g}_{\phi^c_l}(\vect{x}_{\phi^c_l})\vect{u}_{\phi^c_l} + \vect{w}_{\phi^c_l}(\vect{x},t),
    \end{equation}
    where $\vect{f}_{\phi^c_l}(\vect{x}_{\phi^c_l}) = [f_{l_1}(x_{l_1}), \dots ,f_{l_{v_l}}(x_{l_{v_l}})]^\top$, $\vect{g}_{\phi^c_l}(\vect{x}_{\phi^c_l}) = \text{diag}(g_{l_1}(x_{l_1}), \dots ,g_{l_{v_l}}(x_{l_{v_l}}))$,  $\vect{u}_{\phi^c_l} = [u_{l_1}, \dots, u_{l_{v_l}}]^\top$ and $\vect{w}_{\phi^c_l}(\vect{x},t) = [w_{l_1}(\vect{x},t), \dots, w_{l_{v_l}}(\vect{x},t)]^\top$.

   With \eqref{eq: cluster transformed normalized error dynamics} and \eqref{eq: cluster state dynamcis}, the two steps of the proof can be addressed as follows.

    \textbf{Step~A (leaf clusters):} As discussed in Remark \ref{remark on the acyclicity of the cluster induced graph}, $\mathcal{G}'$ is a directed acyclic graph. Therefore, there exists at least one leaf cluster $\mathcal{C}_l = (\mathcal{V}_l, \mathcal{E}_l)$, $l \in \mathcal{L}$, with no outgoing edges. Consequently, from Definition \ref{def: cluster induced graph} and Assumption~\ref{Assumption on the non-existance of a task not in communication inside a cluster}, all $\phi_{l_i}$, with $l_i\in \mathcal{V}_l$, involve only agents in $\mathcal{V}_l$, and satisfaction of
    $\phi_l^c$ is independent of estimations. As a result, $\hat{\rho}^{\psi_i} \equiv \rho^{\psi_i}$ as per Remark \ref{Remark on robustness equivalence under full state knowledge}, $\frac{\partial \hat{\rho}^{\psi_i}}{\partial \estimate{\vect{x}}{i}{{\phi_i}}}^{\top} = 0$, and \eqref{eq: cluster transformed normalized error dynamics} boils down to $\dot{\vect{\epsilon}}^{\phi^c_l} = \vect{J}_{\phi^c_l} \vect{\Gamma}^{-1}_{\phi^c_l}  \{ \vect{\Lambda}_{\vect{x}_{\phi^c_l}}^{\top} \dot{\vect{x}}_{\phi^c_l}- \dot{\vect{\Gamma}}_{\phi^c_l} \vect{e}^{\phi^c_l}\}$,
    % As a result, $\vect{x}_{\phi_{l_i}} \equiv \Bar{\vect{x}}_{\phi_{l_i}} \equiv \hat{\vect{x}}_{\phi_{l_i}}$, $\frac{\partial \hat{\rho}^{\psi_i}}{\partial \estimate{\vect{x}}{i}{{\phi_i}}}^{\top} = 0$, and \eqref{eq: cluster transformed normalized error dynamics} boils down to:
    % \begin{equation}
    %     \label{eq: leaf cluster transformed error}
    %     \dot{\vect{\epsilon}}^{\phi^c_l} = \vect{J}_{\phi^c_l} \vect{\Gamma}^{-1}_{\phi^c_l}  \{ \vect{\Lambda}_{\vect{x}_{\phi^c_l}}^{\top} \dot{\vect{x}}_{\phi^c_l}- \dot{\vect{\Gamma}}_{\phi^c_l} \vect{e}^{\phi^c_l}\},
    % \end{equation}
    where $\vect{\Lambda}_{\vect{x}_{\phi^c_l}} = \Bigl[ \frac{\partial {\rho}^{\psi_{l_1}}}{\partial \vect{x}_{\phi^c_l}} \ \dots \  \frac{\partial {\rho}^{\psi_{l_{v_l}}}}{\partial \vect{x}_{\phi^c_l}} \Bigr]$ and $\vect{\Gamma}_{\phi^c_l} = \text{diag}(\gamma_{\psi_{l_i}}, \dots,\gamma_{\psi_{l_{v_l}}})$.
    Consider now the candidate Lyapunov function $V = \frac{1}{2} \vect{\epsilon}^{\phi^c_l\top} \vect{\epsilon}^{\phi^c_l}$. By differentiating $V$ and by replacing $\dot{\vect{\epsilon}}^{\phi^c_l}$, $\dot{V}$ becomes $ \dot{V} =  \vect{\epsilon}^{\phi^c_l\top} \vect{J}_{\phi^c_l} \vect{\Gamma}^{-1}_{\phi^c_l}  \{ \vect{\Lambda}_{\vect{x}_{\phi^c_l}}^{\top} \dot{\vect{x}}_{\phi^c_l}- \dot{\vect{\Gamma}}_{\phi^c_l} \vect{e}^{\phi^c_l}\}$. Then, by plugging \eqref{eq: cluster state dynamcis}, $ \dot{V} =  \vect{\epsilon}^{\phi^c_l\top} \vect{J}_{\phi^c_l} \vect{\Gamma}^{-1}_{\phi^c_l}  \{ \vect{\Lambda}_{\vect{x}_{\phi^c_l}}^{\top} [\vect{f}_{\phi^c_l}(\vect{x}_{\phi^c_l}) +  \vect{w}_{\phi^c_l}(\vect{x},t) +\vect{g}_{\phi^c_l}(\vect{x}_{\phi^c_l})\vect{u}_{\phi^c_l} ]- \dot{\vect{\Gamma}}_{\phi^c_l} \vect{e}^{\phi^c_l}\}$ holds.
    By stacking $u_{l_i}$ as per \eqref{eq: agent's input} for all $l_i \in \mathcal{V}_l$, the input vector is $\vect{u}_{\phi^c_l} = -\vect{g}^{\top}_{\phi^c_l}(\vect{x}_{\phi^c_l})  \vect{\Lambda}_{\vect{x}_{\phi^c_l}}^{\top} \vect{\Gamma}^{-1}_{\phi^c_l} \vect{J}_{\phi^c_l}\vect{\epsilon}^{\phi^c_l}$ and 
     \begin{equation} 
        \label{eq: Lyapunov function expression}
        \begin{split}
            \dot{V} =& - \vect{\epsilon}^{\phi^c_l\top} \vect{J}_{\phi^c_l} \vect{\Gamma}^{-1}_{\phi^c_l} \vect{\Lambda}_{\vect{x}_{\phi^c_l}}^{\top} \Theta_{\phi^c_l}  \vect{\Lambda}_{\vect{x}_{\phi^c_l}} \vect{\Gamma}^{-1}_{\phi^c_l} \vect{J}_{\phi^c_l}\vect{\epsilon}^{\phi^c_l} \\  &+ \vect{\epsilon}^{\phi^c_l\top} \vect{J}_{\phi^c_l} \vect{\Gamma}^{-1}_{\phi^c_l} \{- \dot{\vect{\Gamma}}_{\phi^c_l} \vect{e}^{\phi^c_l} + \vect{\Lambda}_{\vect{x}_{\phi^c_l}}^{\top} [\vect{f}_{\phi^c_l}+ \vect{w}_{\phi^c_l}]\},
        \end{split}
    \end{equation}
    where $\Theta_{\phi^c_l} = \vect{g}_{\phi^c_l}(\vect{x}_{\phi^c_l}) \vect{g}^\top_{\phi^c_l}(\vect{x}_{\phi^c_l})$ is a positive definite block diagonal matrix by condition (ii) of Assumption~\ref{Assumption on existence of a solution}.
     Given the acyclicity of $\mathcal{G}_T$ from Assumption \ref{Assumption on Task graph}, $\vect{\Lambda}_{\vect{x}_{\phi^c_l}}$ can always be converted, by a proper permutation, into a block \mbox{upper-triangular} matrix. Furthermore, since $\rho^{\max}_{\psi_{l_i }} <\rho^{\text{opt}}_{\psi_{l_i }}$ holds by design, and by (i) of Assumption~\ref{Assumption on concavity and weelposeness of the robustness function}, $\frac{\partial {\rho}^{\psi_{l_i}}}{\partial x_{l_i}}$ are guaranteed to be non-zero for all $l_i \in \mathcal{V}_l$ whenever ${\rho}^{\psi_{l_i}} < \rho^{\text{opt}}_{\psi_{l_i }}$,  the elements on the main diagonal of $\vect{\Lambda}_{\vect{x}_{\phi^c_l}}$ are nonzero. As a result, $\text{rank}(\vect{\Lambda}_{\vect{x}_{\phi^c_l}}) = v_l$ and $\vect{\Lambda}_{\vect{x}_{\phi^c_l}}^{\top} \Theta_{\phi^c_l} \vect{\Lambda}_{\vect{x}_{\phi^c_l}}$ is positive definite. Since $\vect{J}_{\phi^c_l}$ and $\vect{\Gamma}^{-1}_{\phi^c_l} $ are diagonal matrices with positive diagonal entries $J_{\psi_{l_i}}(e^{\psi_{l_i}})$ and $\gamma_{\psi_{l_i}}(t)$, $- \vect{\epsilon}^{\phi^c_l\top } \vect{J}_{\phi^c_l} \vect{\Gamma}^{-1}_{\phi^c_l}\vect{\Lambda}_{\vect{x}_{\phi^c_l}}^{\top} \Theta_{\phi^c_l} \vect{\Lambda}_{\vect{x}_{\phi^c_l}} \vect{\Gamma}^{-1}_{\phi^c_l} \vect{J}_{\phi^c_l}\vect{\epsilon}^{\phi^c_l} <  - \alpha_{\rho} \alpha_{J} \vect{\epsilon}^{\phi^c_l \top}\vect{\epsilon}^{\phi^c_l} $ holds with $\alpha_{\rho} = \lambda_{\text{min}}(\vect{\Lambda}_{\vect{x}_{\phi^c_l}}^{\top} \Theta_{\phi^c_l} \vect{\Lambda}_{\vect{x}_{\phi^c_l}}) \in \mathbb{R}$ and $\alpha_J = \min_{l_i \in \mathcal{V}_l} \Bigl\{\min_{t \in \mathbb{R}_{\geq 0},e^{\psi_{l_i}}\in (-1,0) } \gamma_{\psi_{l_i}}^{-2}(t) J^2_{\psi_{l_i}}(e^{\psi_{l_i}}) \Bigr\} \in \mathbb{R}$. By adding and subtracting $\zeta \norm{\vect{\Gamma}^{-1}_{\phi^c_l} \vect{J}_{\phi^c_l}\vect{\epsilon}^{\phi^c_l}}^2$ for some $0 < \zeta < \alpha_\rho$, \eqref{eq: Lyapunov function expression} can be rewritten as $\dot{V} \leq - (\alpha_{\rho} - \zeta) \alpha_J \norm{\vect{\epsilon}^{\phi^c_l}}^2 + \vect{\epsilon}^{\phi^c_l \top} \vect{J}_{\phi^c_l} \vect{\Gamma}^{-1}_{\phi^c_l} b(t) - \zeta \norm{\vect{\Gamma}^{-1}_{\phi^c_l} \vect{J}_{\phi^c_l}\vect{\epsilon}^{\phi^c_l}}^2$, where $b(t) =  \Bigl\{- \dot{\vect{\Gamma}}_{\phi^c_l} \vect{e}^{\phi^c_l} + \vect{\Lambda}_{\vect{x}_{\phi^c_l}}^{\top} [\vect{f}_{\phi^c_l}+ \vect{w}_{\phi^c_l}]\Bigr\}$. By noticing that $\vect{\epsilon}^{\phi^c_l \top} \vect{J}_{\phi^c_l} \vect{\Gamma}^{-1}_{\phi^c_l} b(t) - \zeta \norm{\vect{\Gamma}^{-1}_{\phi^c_l} \vect{J}_{\phi^c_l}\vect{\epsilon}^{\phi^c_l}}^2$ resemble terms of the quadratic form $\norm{\sqrt{\zeta} \vect{\Gamma}^{-1}_{\phi^c_l} \vect{J}_{\phi^c_l}\vect{\epsilon}^{\phi^c_l} - \frac{1}{2\sqrt{\zeta}}b(t)}^2$, we can upper bound them by $\frac{1}{4\zeta}b^\top(t)b(t)$. As a result, \eqref{eq: Lyapunov function expression} satisfies
     \begin{equation}
        \label{eq: Lyapunov function final upper bound}
        \dot{V} \leq - \kappa V + \vect{b}(t),
    \end{equation}
    where $\kappa = 2(\alpha_{\rho} - \zeta) \alpha_J $ and $\vect{b}(t) = \frac{1}{4\zeta} (\norm{\vect{\Lambda}_{\vect{x}_{\phi^c_l}}}\norm{\vect{f}_{\phi^c_l}} + \norm{\vect{\Lambda}_{\vect{x}_{\phi^c_l}}}\norm{\vect{w}_{\phi^c_l}} + \norm{\dot{\vect{\Gamma}}_{\phi^c_l}  \vect{e}^{\phi^c_l}})^2$. To proceed, let's check whether $\vect{b}(t)$ admit an upper bound $\Bar{\vect{b}}(t)$. For this purpose, define the set $\mathcal{X}_{l_i}(t):=\{\vect{x}_{\phi_{l_i}} \in \mathbb{R}^{N^{TC}_{i}}| -1 < e^{\psi_{l_i}} < 0 \}$ as the one containing the states $\vect{x}_{\phi_{l_i}}$ satisfying task $\phi_{l_i}$ at time $t$. Since $\gamma_{\phi_{l_i}}(t)$ is a non-increasing function, $\mathcal{X}_{l_i}(t_2) \subseteq \mathcal{X}_{l_i}(t_1)$ holds for all $t_1 < t_2$, and $\mathcal{X}_{l_i}(0)$ collects all states $\vect{x}_{\phi_{l_i}}$ for which $e^{\psi_{l_i}} \in (-1,0)$ for all $t \in \mathbb{R}_{\geq 0}$. From condition (iii) of Assumption~\ref{Assumption on concavity and weelposeness of the robustness function}, $\mathcal{X}_{l_i}(0)$ is bounded for all $l_i \in \mathcal{V}_l$. Thus, since $f$ is Lipschitz continuous, $\norm{\vect{f}_{\phi^c_l}(\vect{x}_{\phi^c_l})}$ is bounded in $\mathcal{X}_{l_i}(0)$. Moreover, since $\dot{\vect{\Gamma}}_{\phi^c_l} \vect{e}^{\phi^c_l}$ is a column vector with $\dot{\gamma}_{\phi_{l_i}}e^{\psi_{l_i}}$ as entries, and $|\dot{\gamma}_{\phi_{l_i}}(t)e^{\psi_{l_i}}| < |\dot{\gamma}_{\phi_{l_i}}(0)|$ holds with $|\dot{\gamma}_{\phi_{l_i}}(0)|< \infty$ from $\gamma_{\phi_{l_i}}$ definition, also $\norm{\dot{\gamma}_{\phi_{l_i}}e^{\psi_{l_i}}}$ is bounded. 
    %Let's now analyze the boundedness of $\norm{\vect{\Lambda}_{\vect{x}_{\phi^c_l}}}$. To this end, consider that
    Note that ${\rho}^{\psi_{l_i}}$ is concave, continuously differentiable and \mbox{well-posed} according to Assumption~\ref{Assumption on concavity and weelposeness of the robustness function}. Thus, since every function of this kind has bounded derivative on bounded open sets contained in its domain, $\norm{\vect{\Lambda}_{\vect{x}_{\phi^c_l}}}$ is bounded on $\mathcal{X}_{l_i}(0)$.
    Finally, since $w_i(\vect{x},t)$ satisfies Assumption \ref{Assumption on existence of a solution}, $\norm{\vect{w}_{\phi^c_l}}$ is bounded and an upper bound $\Bar{\vect{b}}(t)$ on $\vect{b}(t)$ is guaranteed to exist for all $\vect{x}_{\phi_{l_i}} \in \mathcal{X}_{l_i}(0)$ and all $l_i \in \mathcal{V}_l$. To conclude Step~A, consider $S(\vect{e}^{\phi^c_l}) = 1 - \exp(- V(\vect{e}^{\phi^c_l}))$. From its definition: (i) $S(\vect{e}^{\phi^c_l}) \in (0,1)$ for all $\vect{e}^{\psi_{l_i}} \in \mathcal{D} $, with $\mathcal{D} = \bigtimes_{j=1}^{v_l}(-1,0)$, and (ii) $S(\vect{e}^{\phi^c_l}) \rightarrow 1$ as $\vect{e}^{\phi^c_l} \rightarrow \partial \mathcal{D} $. Thus, studying the boundedness of $\vect{\epsilon}^{\phi^c_l}$ through the one of $V$, reduces to proving that $S(\vect{e}^{\phi^c_l}) < 1$ holds for all $t$. By differentiating $S(\vect{e}^{\phi^c_l})$, and by replacing \eqref{eq: Lyapunov function final upper bound} and $V(\vect{e}^{\phi^c_l}) = - \ln (1-S(\vect{e}^{\phi^c_l}))$, the following inequality is obtained: $\dot{S}(\vect{e}^{\phi^c_l}) \leq - \kappa (1-S(\vect{e}^{\phi^c_l})) \Bigl(-\frac{1}{\kappa}\vect{b}(t) - \ln (1-S(\vect{e}^{\phi^c_l}))\Bigr)$.
    Since $\kappa$ and $1-S(\vect{e}^{\phi^c_l})$ are positive, they do not affect the sign of $ \dot{S}(\vect{e}^{\phi^c_l})$. Therefore, to verify whether $\dot{S}(\vect{e}^{\phi^c_l}) \leq 0 $, it suffices to study under which condition $-\frac{1}{\kappa}\vect{b}(t) - \ln (1-S(\vect{e}^{\phi^c_l}))\geq 0$ holds. Note that
 $-\frac{1}{\kappa}\vect{b}(t) - \ln (1-S(\vect{e}^{\phi^c_l}))\geq 0 $ is satisfied for all $\vect{e}^{\phi^c_l} \in \Omega^c_{\vect{e}}$, where $\Omega_{\vect{e}}=\bigl\{\vect{e}^{\phi^c_l} \in \mathcal{D}| S(\vect{e}^{\phi^c_l}) < 1- \exp{(-\frac{\bar{\vect{b}}(t)}{\kappa})} \bigr\}$. Furthermore, $-\frac{1}{\kappa}\vect{b}(t) - \ln (1-S(\vect{e}^{\phi^c_l})) = 0 $ holds for $\vect{e}^{\phi^c_l} \in  \partial\Omega_{\vect{e}}$. Consequently, $\dot{S}(\vect{e}^{\phi^c_l}) \leq 0$  for $\vect{e}^{\phi^c_l} \in \Omega^c_{\vect{e}}$, with $\dot{S}(\vect{e}^{\phi^c_l}) =0$ iff $\vect{e}^{\phi^c_l} \in \partial\Omega_{\vect{e}}$. 
    Since the initialization satisfies $\rho^{\max}_{\psi_{l_i}} -\gamma_{\psi_{l_i}}(0) < \rho^{\psi_{l_i}}(\Bar{\vect{x}}_{\phi_{l_i}}(0)) < \rho^{\max}_{\psi_{l_i}}$ for all $l_i \in \mathcal{V}_l$, it follows that $e^{\psi_{l_i}} \in (-1,0)$ and $S(\vect{e}^{\phi^c_l}(0)) < 1$. Moreover, since $\exp{(-\frac{\Bar{\vect{b}}(t)}{\kappa})} > 0$ by definition, $S(\vect{e}^{\phi^c_l})) < 1 $ is preserved for all $t$, independently of whether $\vect{e}^{\phi^c_l}$ is initialized inside or outside $\Omega_{\vect{e}}$. From $S(\vect{e}^{\phi^c_l})) < 1 $, boundedness of $V(\vect{e}^{\phi^c_l})$, and therefore of the transformed error $\vect{\epsilon}^{\phi^c_l}$, follows. As a result, ${\rho}^{\psi_{l_i}}$ satisfies \eqref{eq: inequality on robustness to guarantee task satisfaction, observer not in design stage of gamma} for all $l_i \in \mathcal{V}_l$, and tasks satisfaction is guaranteed for every leaf cluster.

    \textbf{Step B (non-leaf clusters):} Consider $\mathcal{C}_l$ that are not leaf clusters. For simplicity, focus on an \mbox{in-neighbor} of a leaf cluster, i.e., $\mathcal{C}_l$ such that $(\mathcal{C}_l, \mathcal{C}_j) \in \mathcal{E}'$ with $\mathcal{C}_j$ being a leaf cluster. From Step~A, $\psi_{j_q}$ is satisfied for all  $j_q \in \mathcal{V}_j$. Hence, $\epsilon^{\psi_{j_q}} < \infty$ for all $j_q \in \mathcal{V}_j$, and $u_{j_q} < \infty$ holds. Since $\vect{x}_{\phi_{j_q}} \in \mathcal{X}_{j_q}(t)$, with $\mathcal{X}_{j_q}(t)$ bounded, all trajectories of agents  $j_q \in \mathcal{V}_j$ are bounded. Thus, the \mbox{$k$-hop} PPSO  in \eqref{observer dynamics} guarantees $|\tilde{x}^{N^{j_q}_r}_{j_q}| < \delta^{N^{j_q}_r}_{j_q}$ for all $j_q\in \mathcal{V}_j$ and $N^{j_q}_r \in \Nhop{j_q}{k}$ \cite[Thm.~2]{zaccherini2025robustestimationcontrolheterogeneous}.
    Based on the observer guarantees, task satisfaction of non-leaf clusters is proven similarly to Step~A.
    Consider $V = \frac{1}{2} \vect{\epsilon}^{\phi^c_l\top} \vect{\epsilon}^{\phi^c_l}$. By means of \eqref{eq: agent's input}, \eqref{eq: cluster transformed normalized error dynamics} and \eqref{eq: cluster state dynamcis}, $\dot{V}$ can be written as
    \begin{equation} 
        \label{eq: Lyapunov function expression in inner clusters}
        \begin{split}
        \hspace{-0.1cm}
            \dot{V} =& \vect{\epsilon}^{\phi^c_l\top} \vect{J}_{\phi^c_l} \vect{\Gamma}^{-1}_{\phi^c_l} \{- \dot{\vect{\Gamma}}_{\phi^c_l} \vect{e}^{\phi^c_l} + \vect{\Lambda}_{\Hat{\vect{x}}_{\phi^c_l}}^{\top}\dot{\Hat{\vect{x}}}_{\phi^c_l} + \vect{\Lambda}_{\vect{x}_{\phi^c_l}}^{\top} [\vect{f}_{\phi^c_l}+\\  & \vect{w}_{\phi^c_l}] \} - \vect{\epsilon}^{\phi^c_l\top} \vect{J}_{\phi^c_l} \vect{\Gamma}^{-1}_{\phi^c_l} \vect{\Lambda}_{\vect{x}_{\phi^c_l}}^{\top} \Theta_{\phi^c_l}  \vect{\Lambda}_{\vect{x}_{\phi^c_l}} \vect{\Gamma}^{-1}_{\phi^c_l} \vect{J}_{\phi^c_l}\vect{\epsilon}^{\phi^c_l},
        \end{split}
    \end{equation}
    where $\vect{J}_{\phi^c_l}$, $\vect{\Gamma}^{-1}_{\phi^c_l}$ and $\vect{\Lambda}_{\vect{x}_{\phi^c_l}}$ are defined as in \eqref{eq: transformed normalized error dynamics}, and $\Theta_{\phi^c_l}$ as in \eqref{eq: Lyapunov function expression}. 
    As in Step~A, under Assumption~\ref{Assumption on Task graph}, $\vect{\Lambda}_{\vect{x}_{\phi^c_l}}$ can be written as a block \mbox{upper-triangular} matrix with $\frac{\partial \hat{\rho}^{\psi_{l_i}}}{\partial x_{l_i}}$, $l_i \in \mathcal{V}_l$, on the main diagonal.
    % By Assumption \ref{Assumption on concavity and weelposeness of the robustness function}, whenever ${\rho}^{\psi_{l_i}} < \rho^{\text{opt}}_{\psi_{l_i }}$, $\frac{\partial {\rho}^{\psi_{l_i}}}{\partial x_{l_i}} \neq 0$ holds. 
    Then, since $\rho^{\max}_{\psi_{l_i }} <\rho^{\text{opt}}_{\psi_{l_i }}$ by design, and every task depends on the true state of the agents to which it is assigned, $\text{rank}(\vect{\Lambda}_{\vect{x}_{\phi^c_l}}) = v_l$ and $\vect{\Lambda}_{\vect{x}_{\phi^c_l}}^{\top} \Theta_{\phi^c_l} \vect{\Lambda}_{\vect{x}_{\phi^c_l}} \succ 0$. Since $J_{\psi_{l_i}}(e^{\psi_{l_i}}) >0$ and $\Gamma_{\psi_{l_i}}(t)>0$, $- \vect{\epsilon}^{\phi^c_l\top} \vect{J}_{\phi^c_l} \vect{\Gamma}^{-1}_{\phi^c_l} \vect{\Lambda}_{\vect{x}_{\phi^c_l}}^{\top} \Theta_{\phi^c_l}  \vect{\Lambda}_{\vect{x}_{\phi^c_l}} \vect{\Gamma}^{-1}_{\phi^c_l} \vect{J}_{\phi^c_l}\vect{\epsilon}^{\phi^c_l} <  - \alpha_{\rho} \alpha_J \vect{\epsilon}^{\phi^c_l \top}\vect{\epsilon}^{\phi^c_l}$ holds with $\alpha_{\rho} = \lambda_{\text{min}}(\vect{\Lambda}_{\vect{x}_{\phi^c_l}}^{\top} \Theta_{\phi^c_l} \vect{\Lambda}_{\vect{x}_{\phi^c_l}}) \in \mathbb{R}$ and $\alpha_J = \min_{l_i \in \mathcal{V}_l} \Bigl\{\min_{t \in \mathbb{R}_{\geq 0}, e^{\psi_{l_i}}\in (-1,0)} \Gamma_{\psi_{l_i}}^{-2}(t) J^2_{\psi_{l_i}}(e^{\psi_{l_i}}) \Bigr\}$.
    
    By adding and subtracting $\zeta \norm{\vect{\Gamma}^{-1}_{\phi^c_l} \vect{J}_{\phi^c_l}\vect{\epsilon}^{\phi^c_l}}^2$ for some $0 < \zeta < \alpha_\rho$, \eqref{eq: Lyapunov function expression in inner clusters} can be rewritten as:
    \begin{equation}
        \label{eq: Lyapunov function final upper bound, observation case}
        \dot{V} \leq - \kappa V + \vect{b}(t),
    \end{equation}
    where $\kappa = 2(\alpha_{\rho} - \zeta) \alpha_J$ and $\vect{b}(t) = \frac{1}{4\zeta} (\norm{\vect{\Lambda}_{\vect{x}_{\phi^c_l}}}\norm{\vect{f}_{\phi^c_l}} + \norm{\vect{\Lambda}_{\vect{x}_{\phi^c_l}}}\norm{\vect{w}_{\phi^c_l}} + \norm{\dot{\vect{\Gamma}}_{\phi^c_l}  \vect{e}^{\phi^c_l}}+ \norm{\vect{\Lambda}_{\Hat{\vect{x}}_{\phi^c_l}}} \norm{\dot{\Hat{\vect{x}}}_{\phi^c_l}})^2$. Since \eqref{eq: Lyapunov function final upper bound, observation case} resembles \eqref{eq: Lyapunov function final upper bound}, to prove the satisfaction of the cluster's task, it suffices to prove the existence of an upper bound $\Bar{\vect{b}}(t)$ on $\vect{b}(t)$ as in Step~A. Consider the set $\mathcal{X}_{l_i}(t):=\{\hat{\vect{x}}^\top_{\phi_{l_i}}\in \mathbb{R}^{N^{TC}_{i}}| -1 < e^{\psi_{l_i}} < 0 \}$ containing the states $\Bar{\vect{x}}_{\phi_{l_i}}$ and the state estimates $\hat{\vect{x}}^{l_i}_{\phi_{l_i}}$ satisfying task $\phi_{l_i}$. Following Lemma~\ref{lemma on prescribed performance function}, suppose $\Gamma_{\phi_{l_i}}(t)$ assumes its maximum value in $t = t_{\text{max}}$. Then, $\mathcal{X}_{l_i}(t) \subseteq \mathcal{X}_{l_i}(t_{\text{max}})$ for all $t$, and $\mathcal{X}_{l_i}(t_{\text{max}})$ collects all states $\Hat{\vect{x}}_{\phi_{l_i}}$ such that $e^{\psi_{l_i}} \in (-1,0)$ at all $t \in \mathbb{R}_{\geq 0}$. As discussed in Remark~\ref{Remark: weel-posenes of robustness}, since the robustness depends on state estimation, condition (iii) in Assumption~\ref{Assumption on concavity and weelposeness of the robustness function} cannot be guaranteed simply by adding $\psi^{\text{Bound}}_{l_i} := (\norm{\vect{x}_{\phi_{l_i}}}<\Bar{C}_{l_i})$ to $\phi_{l_i}$. However, since the out-neighboring clusters of $\mathcal{C}_l$, i.e., $\mathcal{C}_j$ with $(\mathcal{C}_l, \mathcal{C}_j) \in \mathcal{E}'$, complete their tasks as established in Step~A, augmenting the original $\psi_{l_i}$ and $\psi_{j_q}$ with $\psi^{\text{Bound}}_{l_i} := (\norm{\Bar{\vect{x}}_{\phi_{l_i}}}<\Bar{C}_{l_i})$ and $\psi^{\text{Bound}}_{j_q} := (\norm{\Bar{\vect{x}}_{\phi_{j_q}}}<\Bar{C}_{j_q})$, for all $l_i \in \mathcal{V}_l$ and $j_q \in \mathcal{V}_j$ with $(\mathcal{C}_l, \mathcal{C}_j) \in \mathcal{E}'$, together with the use of the observer \eqref{observer dynamics}, ensures that 
    $\psi_{l_i} \land \psi^{\text{Bound}}_{l_i}$ and $\psi_{j_q} \land \psi^{\text{Bound}}_{j_q}$ are \mbox{well-posed}. 
    Thus, if $\Bar{C}_{l_i},  \Bar{C}_{j_q}\in \mathbb{R}$ are selected big enough to preserve feasibility, $\mathcal{X}_{l_i}(t)$ remains bounded in time and $ \mathcal{X}_{l_i}(t_{\text{max}})$ is guaranteed to be a bounded open set. As a result, for the same reasons as in Step~A, $\norm{\vect{f}_{\phi^c_l}}$, $\norm{\vect{w}_{\phi^c_l}}$, $ \norm{\dot{\vect{\Gamma}}_{\phi^c_l}  \vect{e}^{\phi^c_l}}$, $\norm{\vect{\Lambda}_{\vect{x}_{\phi^c_l}}}$ and $\norm{\vect{\Lambda}_{\Hat{\vect{x}}_{\phi^c_l}}}$ are bounded on $\mathcal{X}_{l_i}(t_{\text{max}})$. $\dot{\Hat{\vect{x}}}_{\phi^c_l}$ is a vector containing the dynamics of all estimates involved in tasks $\phi_{l_i}$, i.e., $\dot{\hat{x}}_{r}^{l_i} = - \rho_{r}^{l_i}(t)^{-1} J(e_{r}^{l_i}) \epsilon_{r}^{^{l_i}}(t)$, for all $l_i \in \mathcal{V}_l$ and $r \in  \Neigh{{l_i}}{T} \setminus \Neigh{{l_i}}{C}$. From the observer guarantees, $|\tilde{x}_{r}^{l_i}| < \delta_{r}^{l_i}$ holds. Thus, $\epsilon_{r}^{l_i} < \infty$, $J(e_{r}^{l_i}) < \infty$, and each component $\dot{\hat{x}}_{r}^{l_i}$ of $\dot{\Hat{\vect{x}}}_{\phi^c_l}$ is bounded. Hence, $\norm{\dot{\Hat{\vect{x}}}_{\phi^c_l}}$ is bounded as well.  Given the previous bounds, an upper bound ${\Bar{\vect{b}}}(t)$ on $\vect{b}(t)$ is guaranteed to exist for all $\hat{\vect{x}}_{\phi_{l_i}} \in \mathcal{X}_{l_i}(t_{\text{max}})$ and all $l_i \in \mathcal{V}_l$. 
    By introducing $S(\vect{e}^{\phi^c_l}(t))$, task satisfaction can be proven as in Step~A. For this reason, the rest of the proof is omitted for brevity.
    
    % In Step B, we considered clusters $\mathcal{C}_l$ whose \mbox{out-neighbors} are leaf. A similar procedure to the one proposed here can be applied to verify task satisfaction for the remaining ones. Owing to the acyclicity of $\mathcal{G}'$ and assuming task satisfaction for the already treated clusters, it is then possible to establish the same property for all \mbox{higher-level} nodes, up to and including the roots. {\hfill$\Box$\par}
   In Step B, we analyzed clusters $\mathcal{C}_l$ whose out-neighbors are leaves. Owing to the acyclic structure of $\mathcal{G}'$, and assuming task satisfaction for the already treated clusters, the same reasoning can be applied iteratively to ensure task satisfaction all the way up to the root clusters.{\hfill$\Box$\par}
\end{pf}

\begin{remark}
    If $g_{l_i}$ in \eqref{eq: agent's dynamic} is a square symmetric matrix with known sign, then $u_{l_i} = - s_{l_i} \sum_{l_j\in \mathcal{V}_l} \frac{\partial {\rho}^{\psi_{l_j}}(\hat{\vect{x}}_{\phi_{l_j}})}{\partial x_{{l_i}}} \Gamma^{-1}_{\psi_{l_j}}  J_{\psi_{l_j}} \epsilon^{\psi_{l_j}}$ ensures Problem \ref{Problem: first problem formulation} to be solved with $s_{l_i} = 1$ if $g_i$ is positive definite and $s_{l_i} = -1$ otherwise.
\end{remark}

\section{Simulation}\label{Section: Simulation}
\begin{figure}[t!]
    \centering
    \includegraphics[width =0.9\linewidth]{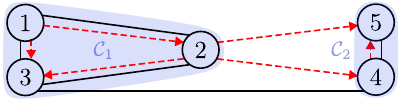}
    \caption{Graphs $\mathcal{G}_C$ and $\mathcal{G}_T$, respectively in solid and dashed lines. The shadow areas represent the induced clusters, i.e., $\mathcal{C}_1$ and $\mathcal{C}_2$.}
    \label{fig:Graph Gt used for consensus}
\end{figure}

Consider the MAS in Fig.~\ref{fig:Graph Gt used for consensus}, composed of $N=5$ agents communicating according to the connected graph $\mathcal{G}_C$. Suppose each agent behaves as $\dot x_i =  f(x_i) + g(x_i) u_i + w_i(t)$, where $x_i =[x_{i,1}, x_{i,2}]^\top$ and $u_i \in \mathbb{R}^2$ are the state and input vectors, $f(x_i) = [x_{i,1} +2x_{i,2} + 0.6 \arctan(x_{i,1}) + 0.3\tanh(0.8 x_{i,2}), 3x_{i,1} +4x_{i,2} + 0.5 \sin(0.7 x_{i,1} + 0.2 x_{i,2}) + 0.4 \arctan(0.5 x_{i,2})]^\top$, $g(x_i) = \begin{bmatrix}\cos(0.5x_{i,2}) & - \sin(0.5x_{i,2}) \\  \sin(0.5x_{i,2}) & \cos(0.5x_{i,2})\end{bmatrix}$, and $w_i(t)= [w_{i,1}, w_{i,2}]^\top$ is a random disturbance with $w_{i,1}, w_{i,2} \in [-6,6]$. 

Assume the agents are subject to the following STL tasks: $\phi_1 = G_{[1,2]}((\norm{x_1 - [0,2]^\top} \leq 7)\land (\norm{x_1 - x_2}^2 \leq 26.75) \land (\norm{x_1 - x_3}^2 \leq 70.05))$,  $\phi_2 = G_{[1,2]}((\norm{x_2 - x_3}^2 \leq 26.75) \land (\norm{x_2 - x_4}^2 \leq 70.05) \land  (\norm{x_2 - x_5}^2 \leq 70.05))$,  $\phi_3 = G_{[1,2]}(\norm{x_3 - [-1.175, -1.618]^\top}^2 \leq 7)$, $\phi_4 = G_{[1,2]}(\norm{x_4 - x_5}^2 \leq 26.75)$, $\phi_5 = G_{[1,2]}(\norm{x_5 - [1.9, 0.618]^\top}^2 \leq 7)$. Intuitively, all agents are required to get sufficiently close to each other, while allowing agents $1$, $3$ and $5$ to reach their goal positions.
From the task and communication graph, respectively $\mathcal{G}_T$ and $\mathcal{G}_C$ in Fig.~\ref{fig:Graph Gt used for consensus}, the MAS is split into $N' = 2$ clusters, i.e., $\mathcal{C}_1$ and $\mathcal{C}_2$ with $\mathcal{V}_1 = \{1,2,3\}$ and $\mathcal{V}_2 = \{4,5\}$. 
Since agents $4$ and $5$ are not in communication with $2$, to allow the satisfaction of $\phi_2$ we leverage the observer introduced in Section \ref{Section: Distributed observer} with $k = 3$. To guarantee the satisfaction of the tasks by means of the controller \eqref{eq: agent's input}, the function $\gamma_{2}$ is adjusted, following Example \ref{example: funnel modification}, to account for the \mbox{worst-case} estimation errors. Given $\mathcal{G}_C$ and $\mathcal{G}_T$ in Fig.~\ref{fig:Graph Gt used for consensus}, by proper system initialization and observer specifications, Assumptions~\ref{Assumption on neighbors}-\ref{Assumption on initialization} are satisfied. Thus, Theorem~\ref{Theorem on task satisfacion} holds and \eqref{eq: agent's input} guarantees tasks satisfaction.

Fig.~\ref{fig: Simulation results}(a) shows the MAS trajectory, along with the time evolution of the state estimates required for task $\phi_2$ and the corresponding uncertainty regions due to estimation errors. Due to space limitations, Figs.~\ref{fig: Simulation results}(b)-(d) present only the results for agent~$2$. As illustrated in Figs.~\ref{fig: Simulation results}(c)-(d), enforcing  $\min_{j \in \{1,2,3\}} \hat{\rho}^{\psi_{2,j}} > 0$ for all $t \in [1,2]$ by imposing \eqref{eq: inequality on robustness to guarantee task satisfaction, observer not in design stage of gamma}, with $\Gamma_{\psi_2}(t) =\min_{j \in \{1,2,3\}} \Gamma_{\psi_{2,j}}$, is sufficient to guarantee satisfaction of $\phi_2$. Indeed, Fig.~\ref{fig: Simulation results}(d) shows that $\rho^{\psi_{2,j}} > 0$ for all $j \in \{1,2,3\}$ and $t\in [1,2]$.

\begin{figure}[t!]
    \centering
    \includegraphics[width =\linewidth]{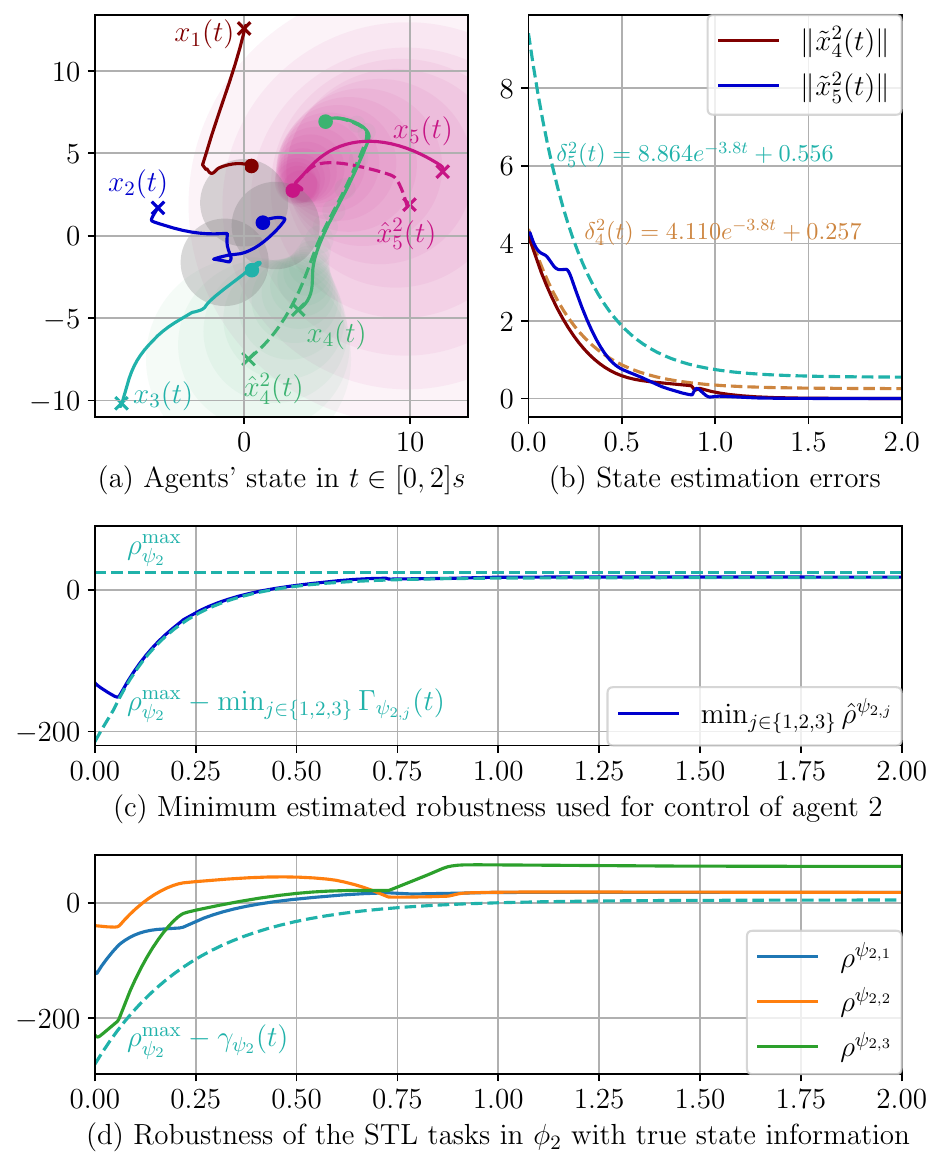}
    \caption{(a) State and estimation trajectories; initial states are represented by crosses, terminal state by dots. The pink and green areas represent the evolution of the state uncertainties with time; the gray circles are the target areas of the individual tasks. (b) Evolution of the norm of the state estimation errors performed by agent $2$; the dashed lines are the prescribed performance functions of the $k$-hop PPSO. (c) Minimum estimated robustness of $\phi_2$. (d) Robustness of tasks in $\phi_2$ with true state information.}
    \label{fig: Simulation results}
\end{figure}

\section{Conclusion and Future Work}\label{Section: Conclusion}
% This paper proposes a decentralized controller for \mbox{large-scale} heterogeneous \mbox{multi-agent} systems subject to bounded external disturbances, where agents must satisfy Signal Temporal Logic (STL) specifications requiring cooperation among non-communicating agents. To address the lack of direct communication, we employ a decentralized $k$-hop Prescribed Performance State Observer ($k$-hop PPSO) to provide each agent with state estimates of those it cannot communicate with directly. By leveraging the performance bounds on the state estimation errors imposed by the $k$-hop PPSO, we first modify the tasks to account for these errors, and then design a decentralized \mbox{closed-form} \mbox{continuous-time} feedback controller that ensures task satisfaction even under \mbox{worst-case} state estimation errors. A simulation result is provided to validate the proposed framework.
\vspace{-0.2cm}
We proposed a decentralized \mbox{observer-based} controller for MAS subject to STL specifications under communication/task graphs mismatch. Given its structure and decentralized nature, the proposed solution is robust to bounded external disturbances and suitable for \mbox{large-scale} heterogeneous systems. Future works will address the \mbox{observer-controller} \mbox{co-design} to relax Assumption~\ref{Assumption on the non-existance of a task not in communication inside a cluster}. 
\bibliography{ifacconf}             % bib file to produce the bibliography
                                                     % with bibtex (preferred)
                                                   
%\begin{thebibliography}{xx}  % you can also add the bibliography by hand

%\bibitem[Able(1956)]{Abl:56}
%B.C. Able.
%\newblock Nucleic acid content of microscope.
%\newblock \emph{Nature}, 135:\penalty0 7--9, 1956.

%\bibitem[Able et~al.(1954)Able, Tagg, and Rush]{AbTaRu:54}
%B.C. Able, R.A. Tagg, and M.~Rush.
%\newblock Enzyme-catalyzed cellular transanimations.
%\newblock In A.F. Round, editor, \emph{Advances in Enzymology}, volume~2, pages
%  125--247. Academic Press, New York, 3rd edition, 1954.

%\bibitem[Keohane(1958)]{Keo:58}
%R.~Keohane.
%\newblock \emph{Power and Interdependence: World Politics in Transitions}.
%\newblock Little, Brown \& Co., Boston, 1958.

%\bibitem[Powers(1985)]{Pow:85}
%T.~Powers.
%\newblock Is there a way out?
%\newblock \emph{Harpers}, pages 35--47, June 1985.

%\bibitem[Soukhanov(1992)]{Heritage:92}
%A.~H. Soukhanov, editor.
%\newblock \emph{{The American Heritage. Dictionary of the American Language}}.
%\newblock Houghton Mifflin Company, 1992.

%\end{thebibliography}

\end{document}